\newcommand{\be}{\begin{equation}}
\newcommand{\en}{\end{equation}}
\newcommand{\bea}{\begin{eqnarray}}
\newcommand{\ena}{\end{eqnarray}}
\newcommand{\beano}{\begin{eqnarray*}}
\newcommand{\enano}{\end{eqnarray*}}
\newcommand{\bee}{\begin{enumerate}}
\newcommand{\ene}{\end{enumerate}}
\newcommand{\K}{{\mathfrak K}}
\newcommand{\mc}{\mathcal}
\newcommand{\D}{{\mc D}}
\newcommand{\F}{{\cal F}}
\newcommand{\Lc}{{\cal L}}
\newcommand{\1}{1 \!\! 1}
\newcommand{\Hil}{\mc H}
\newtheorem{thm}{Theorem}[section]
\newtheorem{prop}[thm]{Proposition}
\newtheorem{defn}[thm]{Definition}
\newenvironment{proof}{\noindent {\bf Proof --}}{\hfill$\square$ \vspace{3mm}\endtrivlist}
\begin{document}

\thispagestyle{empty}

\vspace*{2cm}

\begin{center}
{{\Large \bf Intertwining operators for non self-adjoint Hamiltonians and bicoherent states\footnote{This paper is dedicated to the memory of Syed Twareque Ali,  much more than just a colleague!}}}\\[10mm]

%}

{\large F. Bagarello} \footnote[1]{ Dipartimento di Energia, Ingegneria dell'Informazione e Modelli Matematici,
Facolt\`a di Ingegneria, Universit\`a di Palermo, I-90128  Palermo, and INFN, Universit\`a di di Torino, ITALY\\
e-mail: fabio.bagarello@unipa.it\,\,\,\, Home page: www.unipa.it/fabio.bagarello}
%\vspace{3mm}\\

\end{center}

\vspace*{2cm}

\begin{abstract}
\noindent
This paper is devoted to the construction of what we will call {\em exactly solvable models}, i.e. of quantum mechanical systems described by an Hamiltonian $H$ whose eigenvalues and eigenvectors can be explicitly constructed out of some {\em minimal ingredients}. In particular, motivated by PT-quantum mechanics, we will not insist on any self-adjointness feature of the Hamiltonians considered in our construction. We also introduce the so-called bicoherent states, we analyze some of their properties and we show how they can be used for quantizing a system. Some examples, both in finite and in infinite-dimensional Hilbert spaces, are discussed.

\end{abstract}

%{\bf PACS Numbers}:  .......

\vfill

%\pagenumbering{roman}

\newpage

\section{Introduction}

In recent years a growing interest on non self-adjoint operators with real eigenvalues has spread within the communities of physicists and of mathematicians, both for their possible applications to concrete (e.g., gain and loss) systems, and for the peculiar properties of these operators, which turn out to produce rather interesting mathematics in the Hilbert space where they are defined. A very recent book of collected papers (mainly) on the physical aspects of similar operators is \cite{bagspringer}, while many mathematical peculiarities are discussed in \cite{bagbook}. Concerning the latter, the role of biorthogonal sets, the spectral properties of these operators and the existence of similarity maps and of equivalent, or not equivalent, scalar products are just some of the points considered by several researchers in recent years.

Another very hot topic in physics has to do with the so-called intertwining operators, \cite{intop}. The reason for this is that, during the years, these operators have been used to construct more and more exactly solvable quantum mechanical models, i.e. Hamiltonians (usually, self-adjoint) for which the eigenvalues and the eigenvectors can be deduced in a reasonably simple way, out of a given {\em seed Hamiltonian} $H_1$, and of some  suitable operator $X$ which, in many concrete situations, is not invertible. $X$ is an intertwining operator for $H_1$ and $H_2$ if $H_2X=XH_1$. Hence, if $\varphi_n$ satisfies the eigenvalue equation $H_1\varphi_n=\epsilon_n\varphi_n$, then (at least if $\varphi_n\notin\ker(X)$) calling $\Psi_n=X\varphi_n$, this is an eigenstate of $H_2$: $H_2\Psi_n=\epsilon_n\Psi_n$.
 Of course, when $X$ is invertible, $H_2$ can be written as $H_2=XH_1X^{-1}$. In this case $H_1$ and $H_2$ are usually called similar\cite{footnote2}.  We will see examples of both these cases ($X^{-1}$ exists or not) in Section \ref{sect2}.

In this paper, we merge these two arguments, proposing a somehow general setting in which both these topics can be discussed and used to construct more models which are {\em under control}, i.e., are described by  certain Hamiltonian-like operators for which we can find eigenvalues and eigenvectors, even if we give up the assumption of self-adjointness of the Hamiltonian itself. Our analysis somehow extends what was first considered in \cite{bagintop}, where this kind of problems was first analyzed. Also, we discuss how these results can be used to construct some extended version of coherent states, and which are their properties. In particular, we focus our attention on the existence itself of these vectors, since this is not granted, in general. Then, we will deduce  the related resolutions of the identity, and we will discuss their nature of eigenstates of some particular lowering operators. Also, we will show their role in setting up a simple and interesting quantizing recipe.

This article is organized as follows:

In the next section we introduce the general framework for our intertwining operators for non self-adjoint operators, and we discuss the details of this construction, considering different situations depending on the different properties of the particular intertwining operator considered. Section 3 is devoted to some examples of this framework. In Section 4 we introduce our bicoherent states, and we show how these can be explicitly constructed by discussing a concrete example. The conclusions are given in Section 5.

\section{A general settings for non self-adjoint operators}\label{sect2}

Let us consider an operator $\Theta_1$ on a Hilbert space $\Hil$, in general different from its adjoint $\Theta_1^\dagger$, $\Theta_1\neq\Theta_1^\dagger$, with eigenvalues $\epsilon_n$ and eigenvectors $\varphi_n^{(1)}$:
\be
\Theta_1\varphi_n^{(1)}=\epsilon_n\varphi_n^{(1)},
\label{21}\en
for all $n\geq0$. In particular, in this section we will assume that all the eigenvalues have multiplicity one, but not that they are all real. This is relevant, for instance, in connection with PT-quantum mechanics in the so-called broken phase, \cite{benderrev}.  Quite often, in the literature, the assumption on the multiplicity of the eigenvalues is used just to simplify the notation, see \cite{fbjpa2016} for instance. Here, on the other hand, this is a more serious requirement, often satisfied in concrete models (see \cite{ben1}-\cite{dapro} among others), since it produces important consequences. This will be made clear later on.

For simplicity's sake, we will consider quite often the case in which $\dim(\Hil)<\infty$. This is because, in this case, we don't need to worry about the domains of the operators since they are all necessarily bounded. Nevertheless, quite often all along the paper we will also discuss what happens for infinite-dimensional Hilbert spaces, since this case is relevant in several physical applications. Roughly speaking, what we essentially want to discuss here is the possibility of constructing, out of $\Theta_1$ and of some other ingredient (see below), another (again, not necessarily self-adjoint) operator $\Theta_2$, whose eigenvalues and eigenstates can be deduced out of $\epsilon_n$ and $\varphi_n^{(1)}$. Moreover, since $\Theta_1\neq\Theta_1^\dagger$ and $\Theta_2\neq\Theta_2^\dagger$ in general, we are also interested in deducing, if this is possible, something on the eigenfamilies of $\Theta_1^\dagger$ and $\Theta_2^\dagger$. To be more explicit, the operator $\Theta_2$ we are looking for will be searched using some suitable (generalized) intertwining operator $X$, acting on $\Hil$, which could be invertible or not, bounded or not, and obeying (or not) further useful conditions, which will be clarified all along this section. In particular, in the rest of this section we will consider separately three different situations: we begin by assuming that $X^{-1}$ exists. Then we see what happens if $X^{-1}$ exists {\bf and} $XX^\dagger$ commutes with $\Theta_1$. Finally, we discuss what can be deduced when $X$ admits no inverse at all, but we still have $[XX^\dagger,\Theta_1]=0$. In this case, see Section \ref{sectcase3} below, we will need another assumption, i.e. the fact that $X^\dagger X$ has an inverse \cite{footnote3}.

\subsection{Possibility number 1: $X^{-1}$ exists}\label{sectcase1}

This is the easiest, and the most common situation, widely studied in the literature. In fact, in this case, it is clear that the operator $\Theta_2$ we are looking for can be naturally defined as
\be
\Theta_2=X^{-1}\Theta_1 X,
\label{22}\en
while its eigenvectors are $\tilde\varphi_n^{(2)}=X^{-1}\varphi_n^{(1)}$, for all $n$. The operator $\Theta_2$ is surely well defined on all of $\Hil$ if $\dim(\Hil)<\infty$. Otherwise, this is not granted, and we have to impose conditions on the domains. In this case, in fact, in order to have a well defined operator $\Theta_2$, the set of the vectors $f\in D(X)$, the domain of $X$, such that $Xf\in D(\Theta_1)$ and $\Theta_1 Xf\in D(X^{-1})$, must be dense in $\Hil$. The set of all these vectors is what we call $D(\Theta_2)$, the domain of $\Theta_2$.

Now, it is evident that $\Theta_2\tilde\varphi_n^{(2)}=\epsilon_n\tilde\varphi_n^{(2)}$, for all $n$, and that $X\Theta_2=\Theta_1X$. This case is not particularly interesting for us, since it only means that $\Theta_1$ and $\Theta_2$ are {similar} (i.e. related, as in (\ref{22}) by an invertible operator), and will not be considered further here. We just want to add that, if $\F_\varphi^{(1)}=\{\varphi_n^{(1)}\}$ is a basis for $\Hil$, then it exists an unique biorthogonal basis $\F_\psi^{(1)}=\{\psi_n^{(1)}\}$, \cite{chri}, $\left<\varphi_n^{(1)},\psi_k^{(1)}\right>=\delta_{n,k}$, and the vectors $\psi_n^{(1)}$ turn out to be eigenvectors of $\Theta_1^\dagger$, with eigenvalue $\overline{\epsilon_n}$:
$$
\Theta_1^\dagger\psi_n^{(1)}=\overline{\epsilon_n}\psi_n^{(1)},
$$
for all $n$. As for the eigenstates of $\Theta_2^\dagger$, these can also be easily found and turn out to be $\tilde\psi_n^{(2)}=X^{\dagger}\psi_n^{(1)}$. In fact, if $\psi_n^{(1)}\notin\ker(X^\dagger)$, a direct computation shows that $\Theta_2^\dagger\tilde\psi_n^{(2)}=\overline{\epsilon_n}\tilde\psi_n^{(2)}$. The role of $\ker(X^\dagger)$ is important here and will be stressed all along the paper, starting already from Section \ref{sectcase2}.

\subsection{Possibility number 2: $X^{-1}$ exists and $[XX^\dagger,\Theta_1]=0$}\label{sectcase2}

In this case, the situation is a bit more interesting and richer than before. Let us call $N_1=XX^\dagger$, and let us introduce $\Theta_2$ as in (\ref{22}), $\tilde\varphi_n^{(2)}=X^{-1}\varphi_n^{(1)}$, and  the new vectors $\varphi_n^{(2)}=X^\dagger\varphi_n^{(1)}$. Hence, as we will show now, something interesting can still be deduced, at least if $\varphi_n^{(1)}\notin \ker(X^\dagger)$. In this case, in fact, it is clear that, again, $\Theta_2\tilde\varphi_n^{(2)}=\epsilon_n\tilde\varphi_n^{(2)}$. Moreover,
$$
\Theta_2\varphi_n^{(2)}=\left(X^{-1}\Theta_1 X\right) X^\dagger\varphi_n^{(1)}=X^{-1}X X^\dagger\Theta_1\varphi_n^{(1)}=\epsilon_nX^\dagger\varphi_n^{(1)}= \epsilon_n \varphi_n^{(2)},
$$
where we have used the fact that $[N_1,\Theta_1]=0$.
This means that $ \varphi_n^{(2)}$ are also eigenstates of $\Theta_2$, with the same eigenvalues as  $\tilde\varphi_n^{(2)}$. But, recalling that the eigenvalues are assumed to have multiplicity one, this implies that $\tilde\varphi_n^{(2)}$ and $\varphi_n^{(2)}$ must be proportional:
\be
\tilde\varphi_n^{(2)}=k_n\varphi_n^{(2)}
\label{23}\en
for some non zero $k_n$. Of course, this proportionality between $\tilde\varphi_n^{(2)}$ and $\varphi_n^{(2)}$ could be lost when the multiplicity of some $\epsilon_n$ is larger than one.

\vspace{2mm}

{\bf Remarks:--} (1) It is useful to observe that $k_n=0$ if, and only if, $\varphi_n^{(1)}\in \ker(X^\dagger)$.

\vspace{1mm}

(2) Going back to the condition $[N_1,\Theta_1]=0$, at least if one  of these operators is unbounded, then it is convenient to assume (as it often happens in concrete examples) that a dense subset $\D$ of $\Hil$ exists which is stable under the action of $N_1$ and $\Theta_1$. If this is the case, then $[N_1,\Theta_1]=0$ must be understood in the following way: $N_1\Theta_1f=\Theta_1N_1f$, for all $f\in \D$. Several quantum mechanical systems having this feature are considered in \cite{baginbagbook}, in the context of the so-called $\D$ pseudo-bosons.

\vspace{2mm}

It is clear that $\Theta_1$ and $\Theta_2$ still obey the intertwining relation $X\Theta_2=\Theta_1X$. It is also clear that this equality should only hold on a dense subset of $\Hil$ if $\dim(\Hil)=\infty$ and if some of the operators involved are unbounded. An induction argument shows that $X\Theta_2=\Theta_1X$ can be extended to higher powers of $\Theta_1$ and $\Theta_2$: $X\Theta_2^n=\Theta_1^nX$, for all $n\geq1$. Even more: if $f(x)$ is a function which admits a power series expansion, with infinite convergence radius, then $Xf(\Theta_2)=f(\Theta_1)X$. Again, these two last intertwining equations should be considered on some suitable domain in case of unbounded operators. Otherwise, and in particular if $\dim(\Hil)<\infty$, they hold in all of $\Hil$.

Taking now the adjoint of $X\Theta_2=\Theta_1X$ we get another, equivalent, relation: $\Theta_2^\dagger X^\dagger=X^\dagger \Theta_1^\dagger$. However, under the conditions we are considering here ($[N_1,\Theta_1]=0$), it is also possible to deduce that $\Theta_2 X^\dagger=X^\dagger \Theta_1$ as well. In fact we have:
$$
\Theta_2 X^\dagger=X^{-1}\Theta_1 X X^\dagger= X^{-1} X X^\dagger \Theta_1=X^\dagger \Theta_1,
$$
which is what we had to check. Of course, the same arguments as above imply also that $X\Theta_2^\dagger= \Theta_1^\dagger X$,  $\Theta_2^n X^\dagger=X^\dagger \Theta_1^n$, for all $n\geq1$, and so on.

Let us now go back to the commutativity condition $[N_1,\Theta_1]=0$, and let us recall that $\varphi_n^{(1)}$ is an eigenstate of $\Theta_1$. It is not a big surprise to find that  $\varphi_n^{(1)}$ is also an eigenstate of $N_1$. In fact, equation (\ref{23}) can be rewritten as $X^{-1}\varphi_n^{(1)}=k_nX^\dagger\varphi_n^{(1)}$, for all $n$ such that $\varphi_n^{(1)}\notin\ker(X^\dagger)$.  Hence, left multiplying both sides of this equality for $X$, we get $\varphi_n^{(1)}=k_nN_1\varphi_n^{(1)}$ or, written in a more convenient way,
\be
N_1\varphi_n^{(1)}=k_n^{-1}\,\varphi_n^{(1)}.
\label{24}
\en
We observe that this equation is well defined since, as we have seen, $\varphi_n^{(1)}\notin\ker(X^\dagger)$ if and only if $k_n\neq0$. Incidentally, since $N_1$ is a positive operator, this implies that $k_n$ is strictly positive. The above implication can be inverted and, in fact,  equation (\ref{23}) can be easily deduced from (\ref{24}). Therefore, these two equations are equivalent, at least in absence of domain problems.
Needless to say,  the (possibly) most interesting situation here is when the eigenvalues $k_n^{-1}$ of $N_1$ are degenerate. In fact, if this is not so,  the various $\varphi_n^{(1)}$'s turn out to be also eigenstates of the (at least formally) self-adjoint operator $N_1$, and this would automatically imply that they are mutually orthogonal. When this happens, the two sets $\F_\varphi^{(1)}$ and $\F_\psi^{(1)}=\{\psi_n^{(1)}\}$ essentially coincide.

\subsection{Possibility number 3: $X^{-1}$ does not exist}\label{sectcase3}

This is probably the most interesting situation, since it is  quite different from what it is usually discussed in the literature. In fact, in this case, equation (\ref{22}) makes no sense, but  we can still see that, under suitable conditions on $X$, an operator $\Theta_2$ can still be defined, with the feature that its eigensystem can be easily deduced out of the one of $\Theta_1$. This will be again related to the existence of an intertwining relation between $\Theta_1$ and $\Theta_2$,  which can be deduced also in the present situation. What will be discussed here extends what was originally considered in \cite{bagintop}. The working assumptions are the following: the operator $N_1=XX^\dagger$ commutes with $\Theta_1$ as in Section \ref{sectcase2}, while the operator $N_2=X^\dagger X$ is strictly positive, hence invertible. As usual, we will mainly work under the assumption that $\dim(\Hil)<\infty$, to avoid any domain problem. However, also in view of the examples given in Section 3 and of the application to coherent states in Section 4, we will sometimes comment on the infinite dimensional case.

Under our assumptions, we introduce now
\be
\Theta_2=N_2^{-1}\left(X^\dagger \Theta_1 X\right),
\label{25}\en
while $\varphi_n^{(2)}$ are defined as in Section \ref{sectcase2},  $\varphi_n^{(2)}=X^\dagger\varphi_n^{(1)}$. Once again, the interesting situation is when  $\varphi_n^{(1)}\notin \ker(X^\dagger)$ to ensure that $\varphi_n^{(2)}\neq0$. But, as we will see in some explicit examples, this is not always granted. When this happens, the set of the eigenvalues of $\Theta_2$ turns out to be a proper subset of the set of eigenvalues of $\Theta_1$. We will meet explicitly with this situation in, e.g.,  Section \ref{sectex2}.

A first obvious remark is that, in this case, equation (\ref{23}) makes no sense, since the  vector $\tilde\varphi_n^{(2)}$ cannot even be defined. However, using our assumptions, we can still check that $\Theta_2\varphi_n^{(2)}= \epsilon_n \varphi_n^{(2)}$. The proof, which can be already found in \cite{bagintop}, goes like this
$$
\Theta_2\varphi_n^{(2)}=N_2^{-1}\left(X^\dagger \Theta_1 X\right)(X^\dagger \varphi_n^{(1)})=N_2^{-1}X^\dagger N_1\Theta_1\varphi_n^{(1)}=
N_2^{-1}N_2 X^\dagger (\epsilon_n\varphi_n^{(1)})=\epsilon_n\varphi_n^{(2)},
$$
and uses the fact that $\Theta_1N_1=N_1\Theta_1$.
Now it is also interesting to notice that, despite of the fact that definition (\ref{25}) is significantly different from (\ref{22}), $\Theta_2$ and $\Theta_1$ still satisfy the same intertwining relations as those found previously. To prove this, we first observe that $[N_2,X^\dagger \Theta_1 X]=0$:
$$
N_2 X^\dagger \Theta_1 X=  X^\dagger X(X^\dagger\Theta_1 X)= X^\dagger N_1\Theta_1 X=X^\dagger \Theta_1N_1 X=X^\dagger \Theta_1X X^\dagger X=X^\dagger \Theta_1 X N_2.
$$
Then we also have $[N_2^{-1},X^\dagger \Theta_1 X]=0$. Of course, this is true with no further assumption if $\dim(\Hil)<\infty$, while some more care is needed otherwise. Now we have
$$
X\Theta_2=X (X^\dagger \Theta_1 X)N_2^{-1}=N_1\Theta_1 XN_2^{-1}=\Theta_1N_1 XN_2^{-1}=\Theta_1 X N_2 N_2^{-1}=\Theta_1 X.
$$
As for the second intertwining relation, $X^\dagger \Theta_1=\Theta_2 X^\dagger$, we have
$$
\Theta_2 X^\dagger=N_2^{-1} (X^\dagger \Theta_1 X) X^\dagger =N_2^{-1} X^\dagger \Theta_1 N_1=N_2^{-1} X^\dagger N_1 \Theta_1 =N_2^{-1} N_2 X^\dagger \Theta_1 =X^\dagger \Theta_1,$$
as we had to prove. Moreover, it is clear that the following third intertwining relation is also satisfied: $XN_2=N_1X$, which shows that $X$ is also an intertwining operator between the operators $N_1$ and $N_2$, and not only between $\Theta_1$ and $\Theta_2$.

The following result replace, and correct, a similar statement in \cite{bagintop}:

\begin{prop}\label{prop1}
If $\Theta_1$, $N_1$ and $N_2$ are as above, and in particular if $[\Theta_1,N_1]=0$ and $N_2>0$, then: (i) $[N_2,\Theta_2]=0$; (ii) if $\Theta_1=\Theta_1^\dagger$,  then $\Theta_2=\Theta_2^\dagger$; (iii) if $\Theta_2=\Theta_2^\dagger$ and if $N_1>0$, then $\Theta_1=\Theta_1^\dagger$.
\end{prop}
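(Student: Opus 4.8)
The plan is to reduce all three claims to the algebraic data already assembled just before the statement, namely the commutation $[N_2, X^\dagger \Theta_1 X]=0$ (hence also $[N_2^{-1}, X^\dagger \Theta_1 X]=0$) and the two intertwining relations $X\Theta_2=\Theta_1 X$ and $\Theta_2 X^\dagger = X^\dagger \Theta_1$. Each part then becomes a short manipulation, with the only genuine subtlety concentrated in (iii).

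For (i), I would write $\Theta_2 = N_2^{-1}\left(X^\dagger \Theta_1 X\right)$ and compute the two products directly: $N_2\Theta_2 = X^\dagger \Theta_1 X$, while $\Theta_2 N_2 = N_2^{-1}\left(X^\dagger \Theta_1 X\right)N_2 = X^\dagger \Theta_1 X$, the last equality being exactly the commutation $[N_2, X^\dagger \Theta_1 X]=0$. Thus $N_2\Theta_2 = \Theta_2 N_2$, which is (i).

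For (ii), the point is that $N_2 = X^\dagger X$ is self-adjoint, so $N_2^{-1}$ is self-adjoint as well. Taking the adjoint of the definition gives $\Theta_2^\dagger = X^\dagger \Theta_1^\dagger X\, N_2^{-1}$. Inserting the hypothesis $\Theta_1=\Theta_1^\dagger$ and then using $[N_2^{-1}, X^\dagger \Theta_1 X]=0$ to carry $N_2^{-1}$ back to the left, I obtain $\Theta_2^\dagger = N_2^{-1}\left(X^\dagger \Theta_1 X\right) = \Theta_2$, so $\Theta_2$ is self-adjoint.

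For (iii) I would argue via the intertwining relations rather than the explicit formula. Taking the adjoint of $X\Theta_2=\Theta_1 X$ yields $\Theta_2^\dagger X^\dagger = X^\dagger \Theta_1^\dagger$; using the hypothesis $\Theta_2=\Theta_2^\dagger$ and comparing with the known relation $\Theta_2 X^\dagger = X^\dagger \Theta_1$ gives $X^\dagger(\Theta_1-\Theta_1^\dagger)=0$. This is where $N_1>0$ becomes indispensable: strict positivity of $N_1 = XX^\dagger$ forces $\ker(X^\dagger)=\{0\}$, since $X^\dagger f = 0$ implies $N_1 f = 0$ and hence $f=0$. Because $X^\dagger(\Theta_1-\Theta_1^\dagger)=0$ means that $(\Theta_1-\Theta_1^\dagger)$ maps every vector into $\ker(X^\dagger)$, the injectivity of $X^\dagger$ then yields $\Theta_1=\Theta_1^\dagger$. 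The main obstacle is precisely this last implication: without $N_1>0$ one can only conclude that the range of $\Theta_1-\Theta_1^\dagger$ lies inside $\ker(X^\dagger)$, which need not be trivial, so the hypothesis cannot be dropped. In finite dimension the injectivity step is immediate; in the infinite-dimensional case one must, in addition, check that the adjoint and intertwining manipulations above are valid on a common dense domain, which is why the statement is framed under $\dim(\Hil)<\infty$ (or with the domain provisos of Section \ref{sect2}).
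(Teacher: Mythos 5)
Your proof is correct, and for parts (i) and (ii) it coincides with the paper's own argument: both rest on the previously established identity $[N_2,X^\dagger\Theta_1X]=0$ and the self-adjointness of $N_2^{-1}$. For part (iii) you take a slightly different route. The paper sandwiches the hypothesis $\Theta_2=\Theta_2^\dagger$ between $X$ on the left and $X^\dagger$ on the right, uses the two intertwining relations $X\Theta_2=\Theta_1X$ and $X\Theta_2^\dagger=\Theta_1^\dagger X$ to arrive at $\Theta_1N_1=\Theta_1^\dagger N_1$, and then cancels $N_1$ on the right via $N_1^{-1}$. You instead compare the adjoint of $X\Theta_2=\Theta_1X$ with the relation $\Theta_2X^\dagger=X^\dagger\Theta_1$ to get $X^\dagger(\Theta_1-\Theta_1^\dagger)=0$, and invoke $\ker(X^\dagger)=\ker(N_1)=\{0\}$. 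The two arguments are essentially adjoints of one another (left-cancellation of $X^\dagger$ versus right-cancellation of $N_1$), so neither is stronger; but your phrasing has the merit of isolating exactly where $N_1>0$ enters --- namely the triviality of $\ker(X^\dagger)$ --- and of making explicit why the hypothesis cannot be dropped (the range of $\Theta_1-\Theta_1^\dagger$ would only be confined to $\ker(X^\dagger)$), which is precisely the point of the paper's Remark (2) correcting the earlier claim in the cited reference.
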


\begin{proof}

(i) This claim follows from the definition (\ref{25}) of $\Theta_2$ and from the equality $[N_2,X^\dagger \Theta_1 X]=0$, which we have already proved.

(ii) Let $\Theta_1=\Theta_1^\dagger$. Then, since $N_2^\dagger=N_2$,
$$
\Theta_2^\dagger=\left(N_2^{-1}\left(X^\dagger \Theta_1 X\right)\right)^\dagger =\left(X^\dagger \Theta_1 X\right) N_2^{-1}=\Theta_2.
$$

(iii) Let now assume that $N_1>0$, so that $N_1^{-1}$ exists, and that $\Theta_2=\Theta_2^\dagger$. Then, left-multiplying this last equality by $X$ and right-multiplying it by $X^\dagger$, we get $X\Theta_2X^\dagger=X\Theta_2^\dagger X^\dagger$. Now, since $X\Theta_2=\Theta_1X$ and $X\Theta_2^\dagger=\Theta_1^\dagger X$, we get $X\Theta_2X^\dagger=\Theta_1 XX^\dagger=\Theta_1N_1$ and $X\Theta_2^\dagger X^\dagger = \Theta_1^\dagger XX^\dagger= \Theta_1^\dagger N_1$. Hence $\Theta_1N_1=\Theta_1^\dagger N_1$ which, since $N_1^{-1}$ exists, implies our statement.

\end{proof}

\vspace{2mm}

{\bf Remarks:--} (1) Similar results can be deduced also in the context of Section \ref{sectcase2}, but we will not repeat the proofs here.

\vspace{1mm}

(2) In \cite{bagintop} it was stated that $\Theta_1=\Theta_1^\dagger$ if and only if $\Theta_2=\Theta_2^\dagger$, with no requirement on the invertibility of $N_1$. This was not correct. We notice that the statement here is more natural than the one in \cite{bagintop}, since the role of $N_1$ and $N_2$ is now completely symmetric.

\vspace{1mm}

(3) Under the same assumptions of Proposition \ref{prop1} it is possible to check that $XN_2^{-1}=N_1^{-1}X$ and that $\Theta_1=N_1^{-1}(X\Theta_2X^\dagger)$. This formula is completely analogous to the one in (\ref{25}).

\subsection{The eigensystems for $\Theta_1^\dagger$ and $\Theta_2^\dagger$}

Starting now with the set $\F_\varphi^{(1)}=\{\varphi_n^{(1)}\}$ of eigenstates of $\Theta_1$, see (\ref{21}), and assuming that this set is a basis for $\Hil$, it is possible to define uniquely, \cite{chri}, a second set, $\F_\psi^{(1)}=\{\psi_n^{(1)}\}$, which is still a basis for $\Hil$ and  is biorthogonal to $\F_\varphi^{(1)}$: $\left<\varphi_k^{(1)},\psi_n^{(1)}\right>=\delta_{k,n}$. It is clear that both these sets are complete: if $f\in\Hil$ is orthogonal to all the $\varphi_n^{(1)}$'s (or to all the $\psi_n^{(1)}$'s), $f$ is necessarily zero. Then it is easy to check that, not surprisingly, the vectors in $\F_\psi^{(1)}$ are eigenstates of $\Theta_1^\dagger$:
\be
\Theta_1^\dagger\psi_n^{(1)}=\overline{\epsilon_n}\psi_n^{(1)},
\label{26}\en
for all $n$. In fact we have
$$
\left<\left(\Theta_1^\dagger\psi_n^{(1)}-\overline{\epsilon_n}\psi_n^{(1)}\right),\varphi_n^{(1)}\right>=\left<\psi_n^{(1)},\Theta_1
\varphi_k^{(1)}\right>-\epsilon_n\left<\psi_n^{(1)},\varphi_k^{(1)}\right>=0
$$
for all $n$ and $k$. Formula (\ref{26}) now follows from the completeness of $\F_\varphi^{(1)}$.

Now we want to construct, out of $\Theta_1^\dagger$, a second operator, which we could call $(\Theta_1^\dagger)_2$, mimicking what we have done in (\ref{25}). This is  possible since $[N_1,\Theta_1]=0$ implies that  $[N_1,\Theta_1^\dagger]=0$ as well. Hence we are under the working assumptions listed at the beginning of this section, with $\Theta_1$ replaced by $\Theta_1^\dagger$. Therefore  we can define
\be
(\Theta_1^\dagger)_2=N_2^{-1}\left(X^\dagger \Theta_1^\dagger X\right).
\label{add1}\en
It is interesting to notice that, if we take the adjoint of $\Theta_2$ in (\ref{25}) and we use the fact that $[N_2^{-1},X^\dagger\Theta_1X]=0$, we get $(\Theta_1^\dagger)_2=\Theta_2^\dagger$. In other words: taking the adjoint of the operator $\Theta_2$ in (\ref{25}) or defining, again as in (\ref{25}), the new operator in (\ref{add1}), makes no difference.

Let us now go back to the existence of the non zero vector $\varphi_n^{(2)}$. As we have already discussed, $\varphi_n^{(2)}\neq0$ if and only if $\varphi_n^{(1)}\notin\ker(X^\dagger)$. In \cite{bagintop} we have shown that $\varphi_n^{(1)}\in\ker(X^\dagger)$ if and only if $\varphi_n^{(2)}\in\ker(X)$. It is now easy to check that, if $\varphi_n^{(1)}\notin\ker(X^\dagger)$, then $\varphi_n^{(1)}\notin\ker(N_1)$ and $\varphi_n^{(2)}\notin\ker(N_2)$. In the examples and in Section \ref{sectCS} this aspect will be considered further.

Now, following what we did for $\F_\varphi^{(1)}$ and $\F_\varphi^{(2)}$, we define new vectors $\psi_n^{(2)}=X^\dagger\psi_n^{(1)}$, and the related set $\F_\psi^{(2)}=\{\psi_n^{(2)}\}$. Of course, $\psi_n^{(2)}\neq0$ if $\psi_n^{(1)}\notin\ker(X^\dagger)$.

A direct computation, based on the fact that $[N_1,\Theta_1^\dagger]=0$, shows that each $\psi_n^{(2)}\neq0$ is an eigenstate of $\Theta_2^\dagger$:
$$
\Theta_2^\dagger\psi_n^{(2)}=N_2^{-1}(X^\dagger\Theta_1^\dagger X)X^\dagger\psi_n^{(1)}=N_2^{-1}X^\dagger N_1\Theta_1^\dagger \psi_n^{(1)}=X^\dagger\overline{\epsilon_n}\psi_n^{(1)}=\overline{\epsilon_n}\psi_n^{(2)}.
$$

\vspace{2mm}

We will now prove that an eigenvalue equation like the one in (\ref{24}), deduced under the assumption that $X^{-1}$ does exist, can also be found now, and can even be extended to $N_2$. What we only need is that the eigenvalues $\epsilon_n$ of $\Theta_1$ have multiplicity one, and that $\varphi_n^{(1)}\notin\ker(X^\dagger)$. In fact, under this last condition, $N_1\varphi_n^{(1)}\neq0$. Moreover, since $\Theta_1(N_1\varphi_n^{(1)}) = N_1(\Theta_1\varphi_n^{(1)})=\epsilon_n(N_1\varphi_n^{(1)})$, the vector $N_1\varphi_n^{(1)}$ must be proportional to $\varphi_n^{(1)}$ itself. To distinguish here the situation with respect to that considered in Section \ref{sectcase2}, we call $\tilde k_n$ (instead of $k_n^{-1}$) this proportionality constant. Hence we get
\be
N_1\varphi_n^{(1)}=\tilde k_n\varphi_n^{(1)}.
\label{27}\en
Now, left-multiplying both sides of this equation with $X^\dagger$ we get $(X^\dagger X)(X^\dagger \varphi_n^{(1)})=\tilde k_n X^\dagger \varphi_n^{(1)}$, which can be rewritten as
\be
N_2\varphi_n^{(2)}=\tilde k_n\varphi_n^{(2)}.
\label{28}\en
Hence $N_1$ and $N_2$ have the same eigenvalue $\tilde k_n$, at least for those $n$'s for which $\varphi_n^{(1)}\notin\ker(X^\dagger)$. Notice also that, for all these $n$, we get
$$
\tilde k_n =\left(\frac{\|\varphi_n^{(2)}\|}{\|\varphi_n^{(1)}\|}\right)^2>0.
$$
As already observed previously, it is clear that, if the multiplicity of the eigenvalues $\tilde k_n$ of $N_1$ and $N_2$ is one, this makes of $\F_\varphi^{(1)}$ and $\F_\varphi^{(2)}$ two different sets each made of mutually orthogonal vectors. Hence, if we are interested to  non-o.n. sets of eigenvectors of $\Theta_1$ and $\Theta_2$, it is crucial that not all the $\tilde k_n$ have multiplicity one.

Equation (\ref{27}) can also be written as $XX^\dagger\varphi_n^{(1)}=\tilde k_n\varphi_n^{(1)}$, i.e. as $X\varphi_n^{(2)}=\tilde k_n\varphi_n^{(1)}$, which implies that $\varphi_n^{(1)}=\frac{1}{\tilde k_n}\,X \varphi_n^{(2)}$. This equality can be considered as the {\em inverse} of the equation $\varphi_n^{(2)}=X^\dagger \varphi_n^{(1)}$ in our context.

\vspace{2mm}

Summarizing the main outcome of our analysis we can say that {\em out of the triple $(\Theta_1,\{\epsilon_n\},\F_\varphi^{(1)})$ we can produce three more triples which are associated, in the way discussed before, to other exactly solvable models: $(\Theta_2,\{\epsilon_n\},\F_\varphi^{(2)})$, $(\Theta_1^\dagger,\{\overline\epsilon_n\},\F_\Psi^{(1)})$ and $(\Theta_2^\dagger,\{\overline\epsilon_n\},\F_\Psi^{(2)})$.}

\vspace{2mm}

It should be noticed that an important difference between the pairs $(\F_\varphi^{(1)},\F_\psi^{(1)})$ and $(\F_\varphi^{(2)},\F_\psi^{(2)})$ exists, and  is the following: while the first pair satisfies the orthonormality condition $\left<\varphi_k^{(1)},\psi_n^{(1)}\right>=\delta_{k,n}$ by construction, the second satisfies the slightly different condition: $\left<\varphi_k^{(2)},\psi_n^{(2)}\right>=\tilde k_n\delta_{k,n}$. In fact, we have
\be
\left<\varphi_k^{(2)},\psi_n^{(2)}\right>=\left<X^\dagger\varphi_k^{(1)},X^\dagger\psi_n^{(1)}\right>= \left<N_1\varphi_k^{(1)},\psi_n^{(1)}\right>=\tilde k_n\left<\varphi_k^{(1)},\psi_n^{(1)}\right>= \tilde k_n\delta_{k,n}.
\label{28bis}\en
Of course, strict biorthonormality could be recovered by changing a little bit the definition of, say $\psi_n^{(2)}$,  by replacing the original formula, $\psi_n^{(2)}=X^\dagger\psi_n^{(1)}$, with $\psi_n^{(2)}=\frac{1}{\tilde k_n}\, X^\dagger\psi_n^{(1)}$. However, this would break down the original symmetry between $(\F_\varphi^{(1)},\F_\psi^{(1)})$ and $(\F_\varphi^{(2)},\F_\psi^{(2)})$, and we prefer not to adopt this alternative definition here.

\vspace{3mm}

It is now interesting to stress that, in all the situations considered above, i.e. both when $X^{-1}$ exists and when it doesn't, we have been able to deduce intertwining relations between $\Theta_1$ and $\Theta_2$. In particular, independently of the particular definition adopted here for $\Theta_2$, (\ref{22}) or (\ref{25}), we have always deduced that $X\Theta_2=\Theta_1X$. The Proposition below shows that this equation is really a key feature of our formulation, and not just a consequence of some {\em smart} definitions.

\begin{prop}\label{prop22}
Suppose two operators $\Theta_1$ and $\Theta_2$ satisfy the equation $X\Theta_2=\Theta_1X$, for some suitable $X$. Hence: (i) if $X$ is invertible, then $\Theta_2=X^{-1}\Theta_1X$; (ii) if $X$ is not invertible, but $N_2=X^\dagger X$ is invertible, then $\Theta_2=N_2^{-1}(X^\dagger\Theta_1X)$. In this case, taken an eigenvector $\varphi_n^{(1)}$ of $\Theta_1$,  if we further have $[XX^\dagger,\Theta_1]=0$ and if $\varphi_n^{(1)}\notin\ker(X^\dagger)$, then $\varphi_n^{(2)}=X^\dagger\varphi_n^{(1)}$ is an eigenvector of $\Theta_2$, with the same eigenvalue as $\varphi_n^{(1)}$.
\end{prop}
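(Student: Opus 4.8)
The plan is to read Proposition \ref{prop22} as the converse of the computations already carried out in this section: there one started from an explicit formula for $\Theta_2$ and derived the intertwining relation $X\Theta_2=\Theta_1 X$, whereas now that relation is the hypothesis and the defining formula for $\Theta_2$ is what must be recovered. For part (i) the argument is immediate: since $X^{-1}$ exists, I would simply left-multiply $X\Theta_2=\Theta_1 X$ by $X^{-1}$ to obtain $\Theta_2=X^{-1}\Theta_1 X$, which is precisely (\ref{22}). For part (ii) the key observation is that, although $X$ itself is not invertible, left-multiplying the intertwining relation by $X^\dagger$ turns the leading factor $X$ into $N_2=X^\dagger X$, which by assumption \emph{is} invertible. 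Thus from $X\Theta_2=\Theta_1 X$ I get $N_2\Theta_2=X^\dagger\Theta_1 X$, and applying $N_2^{-1}$ on the left yields $\Theta_2=N_2^{-1}(X^\dagger\Theta_1 X)$, i.e. exactly (\ref{25}). Since each of these manipulations is an equality-preserving left multiplication, no information is lost and $\Theta_2$ is uniquely determined.

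It then remains to verify the eigenvector claim, and here I would essentially replay the computation performed just after (\ref{25}) in Section \ref{sectcase3}. Writing $N_1=XX^\dagger$ and using the hypothesis $[N_1,\Theta_1]=0$ together with the elementary identity $X^\dagger N_1=X^\dagger(XX^\dagger)=(X^\dagger X)X^\dagger=N_2 X^\dagger$, one computes
$$\Theta_2\varphi_n^{(2)}=N_2^{-1}(X^\dagger\Theta_1 X)(X^\dagger\varphi_n^{(1)})=N_2^{-1}X^\dagger N_1\Theta_1\varphi_n^{(1)}=N_2^{-1}N_2 X^\dagger(\epsilon_n\varphi_n^{(1)})=\epsilon_n\varphi_n^{(2)}.$$
The hypothesis $\varphi_n^{(1)}\notin\ker(X^\dagger)$ is exactly what guarantees that $\varphi_n^{(2)}=X^\dagger\varphi_n^{(1)}$ is genuinely nonzero, so that it is a bona fide eigenvector (with eigenvalue $\epsilon_n$, the same as $\varphi_n^{(1)}$) rather than the trivial vector.

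I do not expect a serious obstacle, since the statement is purely algebraic and, in the finite-dimensional setting assumed throughout, completely clean; the whole content is the trick of recovering $\Theta_2$ by inserting $X^\dagger$ so as to produce the invertible $N_2$. The only points deserving care are the non-triviality of $\varphi_n^{(2)}$, which is handled by the kernel hypothesis, and—should one wish to state the result in infinite dimensions—the usual domain caveats, namely that the left multiplications by $X^\dagger$ and by $N_2^{-1}$ be performed on a dense domain on which all the relevant products are defined.
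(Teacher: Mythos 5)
Your proposal is correct and follows essentially the same route as the paper: left-multiplication by $X^{-1}$ for (i), left-multiplication by $X^\dagger$ followed by $N_2^{-1}$ for (ii), and the eigenvector claim verified by the same computation the paper carries out just after (\ref{25}) (and then merely labels ``clear'' in its proof of the Proposition). Your version simply spells out that last step explicitly, which is a faithful expansion rather than a different argument.
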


\begin{proof}
The proof of (i) is trivial. As for (ii), we observe that left-multiplying  $X\Theta_2=\Theta_1X$ for $X^\dagger$, and using the fact that $N_2=X^\dagger X$ is invertible, we deduce that $\Theta_2=N_2^{-1}(X^\dagger\Theta_1X)$. The rest of the statement is clear.

\end{proof}

\vspace{3mm}

{\bf Remarks:--} (1) In \cite{fbjpa2016} we have discussed the role of antilinear operators for some intertwining relations between operators with complex eigenvalues. At a first sight,  this might appear not extremely far away from what we have done here. However, the two situations are indeed very different since the assumptions on $N_1$ and $N_2$ considered here play a crucial role in what is done in this section, while similar conditions are completely absent in \cite{fbjpa2016}.

\vspace{1mm}

(2) It may be interesting to notice that the above general scheme can be easily extended to a slightly more general situation, i.e. to the case in which $\Theta_1$ and $\Theta_2$ act on two different Hilbert spaces, $\Hil_1$ and $\Hil_2$. In this case, of course, $X$, $X^{-1}$ (if it exists) and $X^\dagger$ are operators between $\Hil_1$ and $\Hil_2$ or viceversa, $N_1$ and $N_2$ also act on different Hilbert spaces, and $\varphi_n^{(1)}$ and $\varphi_n^{(2)}$ belong to these different spaces. However, with some minor (and obvious) modifications, all the results proved in this section can again be deduced. We will see a concrete example of this situation in Section \ref{sectex2}.

\section{Examples}

In this section we give some examples of our general settings, considering first two finite and then two infinite-dimensional cases. Among other aspects, we will see that it is not so rare that some $\varphi_n^{(1)}$ belongs to $\ker(X^\dagger)$, and we will see how this implies that the set of eigenvalues of $\Theta_2$ is just a proper subset of the set of the eigenvalues of $\Theta_1$, as already mentioned in Section 2.

\subsection{A first finite-dimensional example}

A first no-go result suggests that, if $\dim(\Hil)<\infty$, there is no square matrix $X$ such that $X$ is not invertible and still $N_2$ is strictly positive. This is because, since $\det(X)=0$, $\det(N_2)=\det(X^\dagger X)=0$. Then,  $N_2^{-1}$ does not exist and the working conditions of Section \ref{sectcase3} are never satisfied. However, we are left with the possibility of using the approaches discussed in Sections \ref{sectcase1} or \ref{sectcase2}.

For concreteness, we consider now the matrix
$$
X=\left(
    \begin{array}{cc}
      x_{11} & x_{12} \\
      -\overline{x_{12}} & \overline{x_{11}} \\
    \end{array}
  \right)
$$
With this particularly simple choice, we have $N_1=N_2=\tilde x\1$, where $\1$ is the $2\times2$ identity matrix and $\tilde x=|x_{11}|^2+|x_{12}|^2=\det(X)$. It is clear that, if $x_{11}$ or $x_{12}$ are non zero, $N_2^{-1}$ does exist. Also, $[N_1,\Theta_1]=0$ for any two-by-two matrix $\Theta_1$. However, since $\det(X)=\tilde x >0$, it is clear that $X^{-1}$  exists as well. Hence we could adopt the strategies proposed in Sections  \ref{sectcase1} and  \ref{sectcase2}. We will not make any explicit choice of $\Theta_1$ here. Rather than this, we observe that $X^{-1}=(\tilde x)^{-1}\,X^\dagger$. This implies that $\tilde\varphi_n^{(2)}=X^{-1}\varphi_n^{(1)}=(\tilde x)^{-1}\,X^\dagger\varphi_n^{(1)}=(\tilde x)^{-1}\varphi_n^{(2)}$, which is in complete agreement with (\ref{23}).

\subsection{A second finite-dimensional example}\label{sectex2}

A possible way out from the previous no-go result consists in considering an operator $X$ between different Hilbert spaces, $\Hil_1$ and $\Hil_2$. In this case, if $\dim(\Hil_1)\neq\dim(\Hil_2)$, with $\dim(\Hil_j)<\infty$, $j=1,2$, we can still construct examples of the framework discussed in Section \ref{sectcase3}. Let us consider the following non self-adjoint operator $\Theta_1$, defined on $\Hil_1={\Bbb C}^3$:

{\scriptsize
$$
\Theta_1=\left(
           \begin{array}{ccc}
             \frac{1}{6}((5+\sqrt{3})E_1-(1+\sqrt{3})E_2+2E_3) & \frac{1}{6}((1+\sqrt{3})E_1-(2+\sqrt{3})E_2+E_3) & \frac{1}{6}((-7-3\sqrt{3})E_1+(5+3\sqrt{3})E_2+2E_3) \\
             \frac{1}{3}((1-2\sqrt{3})E_1+2(-1+\sqrt{3})E_2+E_3) & \frac{1}{3}(-2E_1+4E_2+E_3) & \frac{1}{3}((1+2\sqrt{3})E_1-2(1-\sqrt{3})E_2+E_3) \\
             \frac{1}{6}((-7+3\sqrt{3})E_1+(5-3\sqrt{3})E_2+2E_3) & \frac{1}{6}((1-\sqrt{3})E_1+(-2+\sqrt{3})E_2+E_3) & \frac{1}{6}((5-\sqrt{3})E_1+(-1+\sqrt{3})E_2+2E_3) \\
           \end{array}
         \right)
$$
}
with eigenvectors
$$
\varphi_1^{(1)}=\left(
                  \begin{array}{c}
                    -\frac{1}{\sqrt{2}}-\frac{1}{\sqrt{6}} \\
                    \sqrt{\frac{2}{3}} \\
                    \frac{1}{\sqrt{2}}-\frac{1}{\sqrt{6}} \\
                  \end{array}
                \right), \quad \varphi_2^{(1)}=\left(
                  \begin{array}{c}
                    -\sqrt{\frac{2}{3}}-\frac{1}{\sqrt{2}} \\
                    2\sqrt{\frac{2}{3}} \\
                    -\sqrt{\frac{2}{3}}+\frac{1}{\sqrt{2}} \\
                  \end{array}
                \right), \quad \varphi_3^{(1)}= \frac{1}{\sqrt{3}}\left(
                  \begin{array}{c}
                    1 \\
                    1 \\
                    1 \\
                  \end{array}
                \right).
$$
We have $\Theta_1\varphi_n^{(1)}=E_n\varphi_n^{(1)}$, $n=1,2,3$. Here we assume that $E_j\in {\Bbb R}$, for all $j$. We further define
$$
X=\left(
    \begin{array}{cc}
      0 & 1 \\
      -\frac{\sqrt{3}}{2} & -\frac{1}{2} \\
      \frac{\sqrt{3}}{2} & -\frac{1}{2} \\
    \end{array}
  \right), \quad \Rightarrow N_1=XX^\dagger=\left(
                                              \begin{array}{ccc}
                                                1 & -\frac{1}{2} & -\frac{1}{2} \\
                                                -\frac{1}{2} & 1 & -\frac{1}{2} \\
                                                -\frac{1}{2} & -\frac{1}{2} & 1 \\
                                              \end{array}
                                            \right), \quad N_2=X^\dagger X= \frac{3}{2}\1_2,
$$
where $\1_2$ is the identity matrix in $\Hil_2={\Bbb C}^2$. Of course $N_2^{-1}$ exists while $X^{-1}$ does not\cite{footnote1}. Moreover, $\Theta_1 N_1=N_1\Theta_1$. Hence, we are in the conditions of Section \ref{sectcase3}. Now:
$$
\varphi_1^{(2)}=X^\dagger\varphi_1^{(1)}=\left(
                                           \begin{array}{c}
                                             \frac{-3+\sqrt{3}}{2\sqrt{2}} \\
                                             -\frac{1}{2}\sqrt{3(2+\sqrt{3})} \\
                                           \end{array}
                                         \right), \,
                                         \varphi_2^{(2)}=X^\dagger\varphi_2^{(1)}=\left(
                                           \begin{array}{c}
                                             \frac{-6+\sqrt{3}}{2\sqrt{2}} \\
                                             -\frac{3+2\sqrt{3}}{2\sqrt{2}} \\
                                           \end{array}
                                         \right), \, \varphi_3^{(2)}=X^\dagger\varphi_3^{(1)}=\left(
                                           \begin{array}{c}
                                             0 \\
                                             0 \\
                                           \end{array}
                                         \right),
$$
which shows, in particular, that $\varphi_3^{(1)}\in \ker(X^\dagger)$. This is not surprising, since $\dim(\Hil_2)=2$, and for this reason we can only have, at most, two linearly independent vectors in $\Hil_2$. We also find, using (\ref{25}),
$$
\Theta_2=\frac{1}{4}\left(
                      \begin{array}{cc}
                        -(1+\sqrt{3})E_1+(5+\sqrt{3})E_2 & (7-3\sqrt{3})(E_1-E_2) \\
                        -(5+3\sqrt{3})(E_1-E_2) & (5+\sqrt{3})E_1-(1+\sqrt{3})E_2 \\
                      \end{array}
                    \right),
$$
and we see that $E_3$ does not appear anymore in $\Theta_2$. This is in agreement with the fact that $\Theta_2\varphi_n^{(2)}=E_n\varphi_n^{(2)}$, $n=1,2$. Moreover, an explicit computation shows that $X\Theta_2=\Theta_1X$ and that $X^\dagger\Theta_1=\Theta_2 X^\dagger$.

Now the vectors of $\F_\psi^{(1)}$ can be found to be
$$
\psi_1^{(1)}=\left(
                  \begin{array}{c}
                    -\sqrt{2}+\frac{1}{\sqrt{6}} \\
                    -\sqrt{\frac{2}{3}} \\
                    \sqrt{2}+\frac{1}{\sqrt{6}} \\
                  \end{array}
                \right), \quad \psi_2^{(1)}=\left(
                  \begin{array}{c}
                    \frac{1}{\sqrt{2}}-\frac{1}{\sqrt{6}} \\
                    \sqrt{\frac{2}{3}} \\
                    -\sqrt{\frac{2}{3}+\frac{1}{\sqrt{3}}} \\
                  \end{array}
                \right), \quad \psi_3^{(1)}= \frac{1}{\sqrt{3}}\left(
                  \begin{array}{c}
                    1 \\
                    1 \\
                    1 \\
                  \end{array}
                \right).
$$
They are biorthogonal to  $\F_\varphi^{(1)}$, $\left<\varphi_n^{(1)},\psi_m^{(1)}\right>=\delta_{n,m}$, and, as expected, they are eigenstates of $\Theta_1^\dagger$: $\Theta_1^\dagger\psi_n^{(1)}=E_n\psi_n^{(1)}$, $n=1,2,3$. Now
$$
\psi_1^{(2)}=X^\dagger\psi_1^{(1)}=\left(
                                           \begin{array}{c}
                                             \sqrt{\frac{3}{2}}+\frac{3}{2\sqrt{2}} \\
                                             \frac{-6+\sqrt{3}}{2\sqrt{2}} \\
                                           \end{array}
                                         \right), \qquad
                                         \psi_2^{(2)}=X^\dagger\psi_2^{(1)}=\left(
                                           \begin{array}{c}
                                             -\frac{1}{2}\sqrt{3(2+\sqrt{3})} \\
                                             \frac{1}{2}\sqrt{3(2-\sqrt{3})} \\
                                           \end{array}
                                         \right),
$$
while we find that
$$
\psi_3^{(2)}=X^\dagger\psi_3^{(1)}=\left(
                                           \begin{array}{c}
                                             0 \\
                                             0 \\
                                           \end{array}
                                         \right).
                                         $$
Hence $\psi_3^{(1)}\in\ker(X^\dagger)$, and $\Theta_2^\dagger\psi_n^{(2)}=E_n\psi_n^{(2)}$, $n=1,2$. Moreover we get  $\left<\varphi_n^{(2)},\psi_m^{(2)}\right>=\frac{3}{2}\delta_{n,m}$, $n=1,2$. Therefore $\tilde k_1=\tilde k_2=\frac{3}{2}$.

\vspace{2mm}

{\bf Remark:--} It might be interesting to observe that the matrix $\Theta_2$ can be written in terms of pseudo-fermionic operators, see \cite{baginbagbook}. Two such operators $a$ and $b$, satisfy $\{a,b\}=\1$, $a^2=b^2=0$, and can be represented for instance as
$$
a=\alpha_{12}\left(
               \begin{array}{cc}
                 \alpha & 1 \\
                 -\alpha^2 & -\alpha \\
               \end{array}
             \right), \qquad b=\beta_{12}\left(
               \begin{array}{cc}
                 \beta & 1 \\
                 -\beta^2 & -\beta \\
               \end{array}
             \right),
$$
where, calling $\gamma^2=-\alpha_{12}\beta_{12}$, the parameters must be such that $(\alpha-\beta)^2\gamma^2=1$. In \cite{baggarg} we have seen that the general Hamiltonian $H=\omega ba+\rho\1$ takes the form
\be
H=\left(
               \begin{array}{cc}
                 \omega\gamma\alpha+\rho & \omega\gamma \\
                 -\omega\gamma\alpha\beta & -\omega\gamma\beta+\rho \\
               \end{array}
             \right).
\label{add2}\en
For any such operator, using some general facts arising from  the general pseudo-fermionic structure, we have proposed a simple method, based on ladder operators, to deduce the eigenvalues and eigenvectors of $H$ and of $H^\dagger$. Intertwining relations can also be deduced, and different scalar products on the Hilbert space $\Bbb C^2$ are shown to play a role, in some cases.

Now, $\Theta_2$ can be written as in (\ref{add2}) taking, for instance, $\alpha=-2-\sqrt{3}$, $\beta=\frac{\sqrt{3}+1}{3\sqrt{3}-7}$, $\rho=E_1$, $\alpha_{12}=\sqrt{\frac{1}{8}(38-21\sqrt{3})}=-\beta_{12}$ and $\omega\gamma=\frac{1}{4}(7-3\sqrt{3})(E_1-E_2)$, with $\gamma$ as above.

Of course, this has useful consequences, as discussed in \cite{baginbagbook,baggarg}, since the general framework proposed there for the deformed anti commutation relations can now be used for $\Theta_2$. Viceversa, of course, one could also look now for explicit connections of $\Theta_2$ with some quantum mechanical system, for instance in the context of PT-quantum mechanics.

Finally, since $\Theta_2$ arises out of the 3-by-3 matrix $\Theta_1$, one can imagine that investigating the inverse map $\Theta_2\rightarrow\Theta_1$, the pseudo-fermionic framework (which is originally defined in a two-dimensional space) can be somehow extended to a 3-dimensional Hilbert space, similarly to what was done in \cite{BAG2013}.

\subsection{A first example with $\dim(\Hil)=\infty$}\label{sectex4}

Let $\{\epsilon_n\}$, with $0=\epsilon_0<\epsilon_1<\epsilon_2<\cdots$, and $\{\theta_n\}$ be two sequences of real numbers, and let us assume that, at least for some $n$, $\theta_n$ is not an integer multiple of $\pi$. In this way, $\Im\{e^{i\theta_n}\}\neq0$. Now, let $\F_e=\{e_n, \,n\geq0\}$ be an orthonormal (o.n.) basis for $\Hil$, and let us define the following operator:
$$
D(\Theta_1)=\left\{f\in\Hil:\, \sum_{n=0}^\infty \epsilon_n e^{i\theta_n}\left<e_n,f\right>e_n\in\Hil\right\},
$$
and
$$
\Theta_1 f=\sum_{n=0}^\infty \epsilon_n e^{i\theta_n}\left<e_n,f\right>e_n,
$$
for all $f\in D(\Theta_1)$. Of course, $D(\Theta_1)$ is dense since it contains the linear span of the $e_n$'s, $\Lc_e=l.s.\{e_n\}$. In particular, $\Theta_1 e_n=\epsilon_n e^{i\theta_n} e_n$, for all $n$. This shows that $\epsilon_n e^{i\theta_n}$ are the eigenvalues of $\Theta_1$, not all reals, that $\Theta_1\neq\Theta_1^\dagger$, and that $\varphi_n^{(1)}=e_n$. This is an explicit example showing that the eigenvectors of a manifestly non self-adjoint operator can still form an o.n. basis. Notice also that we are here considering a sort of {\em inverse problem}: rather than deducing the eigensystem out of a given operator, we use a given set of numbers and vectors to define an operator which has this particular set of numbers and vectors as eigensystem.

Now, an operator $X$ having the properties required in Section \ref{sectcase3} can be defined as follows:
$$
D(X)=\left\{f\in\Hil:\, \sum_{n=0}^\infty \sqrt{\epsilon_{n+1}} \left<e_n,f\right>e_{n+1}\in\Hil\right\},
$$
which also contains $\Lc_e$, and
$$
X f=\sum_{n=0}^\infty \sqrt{\epsilon_{n+1}} \left<e_n,f\right>e_{n+1},
$$
for all $f\in D(X)$. The adjoint $X^\dagger$ of $X$ is $X^\dagger g=\sum_{n=0}^\infty \sqrt{\epsilon_{n+1}} \left<e_{n+1},g\right>e_{n}$, for all $g\in D(X^\dagger)$, which again contains $\Lc_e$. $X^\dagger$ behaves as a lowering operator for $\F_e$. Indeed we have
$$
X^\dagger e_n=\left\{
    \begin{array}{ll}
    0, \qquad\qquad \mbox{ if } n=0\\
    \sqrt{\epsilon_n} e_{n-1} \quad\,\, \mbox{ if } n\geq1.\\
    \end{array}
        \right.
$$
Of course, $X$ is a raising operator: $X e_n=\sqrt{\epsilon_{n+1}}\,e_{n+1}$. As for $N_1$ and $N_2$, we have $D(N_j)\supseteq\Lc_e$, $j=1,2$, and we find, for all $f\in\Lc_e$,
$$
N_1f=\sum_{n=0}^\infty\epsilon_n \left<e_n,f\right>e_n, \qquad N_2f=\sum_{n=0}^\infty\epsilon_{n+1} \left<e_n,f\right>e_n.
$$
Then, $N_2^{-1}$ does exist, and a direct computation shows that $\Theta_1N_1f=N_1\Theta_1f$, for all $f\in \Lc_e$, which is left stable by the action of all the operators introduced so far. Hence, $\Lc_e$ plays the role of the set $\D$  introduced in Section \ref{sectcase2}, and the assumptions of Section \ref{sectcase3} are satisfied (in their extended version adapted to infinite dimensional Hilbert spaces).

The set $\F_\psi^{(1)}$ biorthogonal to $\F_\varphi^{(1)}=\F_e$ is, of course, $\F_e$ itself, and equation (\ref{26}) can be explicitly verified. Analogously, all the intertwining equations introduced in Section \ref{sect2} can be explicitly checked. As for $\Theta_2$, we find that
$$
\Theta_2 f=\sum_{n=1}^\infty \epsilon_n e^{i\theta_n}\left<e_{n-1},f\right>e_{n-1},
$$
for all $f\in D(\Theta_2)$, which is the set of vectors for which the series $\sum_{n=1}^\infty \epsilon_n e^{i\theta_n}\left<e_{n-1},f\right>e_{n-1}$ converges in $\Hil$. Once more, $\Lc_e\subseteq D(\Theta_2)$. Now, since
$$
\varphi_n^{(2)}=X^\dagger \varphi_n^{(1)} =X^\dagger e_n =\left\{
    \begin{array}{ll}
    0, \qquad\qquad \mbox{ if } n=0\\
    \sqrt{\epsilon_n} e_{n-1} \quad\,\, \mbox{ if } n\geq1,\\
    \end{array}
        \right.
$$
we can check that $\Theta_2\varphi_n^{(2)}=\epsilon_n\varphi_n^{(2)}$, for all $n\geq0$. Moreover, since $\psi_n^{(1)}=\varphi_n^{(1)}=e_n$, and since $\psi_n^{(2)}=X^\dagger\psi_n^{(1)}$, it follows that $\psi_n^{(2)}=\varphi_n^{(2)}$. We find that
$
\Theta_2^\dagger\psi_n^{(2)}=\overline{\epsilon_n}\psi_n^{(2)},
$
as well as $N_j\varphi_n^{(j)}=\epsilon_n\varphi_n^{(j)}$ and $N_j\psi_n^{(j)}=\epsilon_n\psi_n^{(j)}$, $j=1,2$. Finally, while $\left<\varphi_n^{(1)},\psi_m^{(1)}\right>=\delta_{n,m}$, we have $\left<\varphi_n^{(2)},\psi_m^{(2)}\right>=\epsilon_n\delta_{n,m}$. Hence the general structure of Section \ref{sectcase3} is fully recovered. It might be interesting to notice that the orthogonality of the sets $\F_\varphi^{(j)}$ and $\F_\psi^{(j)}$, $j=1,2$, was expected because their vectors are eigenstates of $N_1$ and $N_2$ with eigenvalues which are not degenerate.

\subsection{A second example with $\dim(\Hil)=\infty$}\label{sectex3}

This example adapts what was originally discussed in \cite{bagintop3} to the present situation. Let $\Theta_1$ be the following (infinite) matrix, acting on the Hilbert space $\Hil=l^2(\Bbb N)$:
$$
\Theta_1=\left(
           \begin{array}{cccccccc}
             \alpha_1 & \beta_1 & 0 & 0 & 0 & 0 & . & . \\
             \beta_1 & \alpha_1 & 0 & 0 & 0 & 0 & . & . \\
             0 & 0 & \alpha_2 & \beta_2 & 0 & 0 & . & . \\
             0 & 0 & \beta_2 & \alpha_2 & 0 & 0 & . & . \\
             0 & 0 & 0 & 0 & \alpha_3 & \beta_3 & . & . \\
             0 & 0 & 0 & 0 & \beta_3 & \alpha_3 & . & . \\
             . & . & . & . & . & . & . & . \\
             . & . & . & . & . & . & . & . \\
           \end{array}
         \right),
$$
where $\alpha_j$ and $\beta_j$ are, in general, complex numbers. Its eigevectors are
$$
\varphi_1^{(1)}=\frac{1}{\sqrt{2}}\left(
  \begin{array}{c}
    1 \\
    -1 \\
    0 \\
    0 \\
    0 \\
    0 \\
    . \\
    . \\
  \end{array}
\right), \, \varphi_2^{(1)}=\frac{1}{\sqrt{2}}\left(
  \begin{array}{c}
    1 \\
    1 \\
    0 \\
    0 \\
    0 \\
    0 \\
    . \\
    . \\
  \end{array}
\right), \, \varphi_3^{(1)}=\frac{1}{\sqrt{2}}\left(
  \begin{array}{c}
    0 \\
    0 \\
    1 \\
    -1 \\
    0 \\
    0 \\
    . \\
    . \\
  \end{array}
\right), \, \varphi_4^{(1)}=\frac{1}{\sqrt{2}}\left(
  \begin{array}{c}
    0 \\
    0 \\
    1 \\
    1 \\
    0 \\
    0 \\
    . \\
    . \\
  \end{array}
\right),
$$
and so on. The corresponding eigenvalues are $\epsilon_1=\alpha_1-\beta_1$, $\epsilon_2=\alpha_1+\beta_1$, $\epsilon_3=\alpha_2-\beta_2$, $\epsilon_4=\alpha_2+\beta_2$, $\ldots$: $\Theta_1\varphi_n^{(1)}=\epsilon_n\varphi_n^{(1)}$. It is clear that, fixing properly $\alpha_j$ and $\beta_j$, all the eigenvalues turn out to be different. Also, if the imaginary parts of some $\alpha_j$ or $\beta_j$ is non zero, then not all the eigenvalues of $\Theta_1$ are real.
It is also important to notice that $\Theta_1$ is densely defined, since it is well defined on the linear span of the $\varphi_n^{(1)}$'s, which form an o.n. basis for $\Hil$.

\vspace{2mm}

{\bf Remarks:--} (1) If each $\beta_j$ is purely imaginary and each $\alpha_j$ is real, then it is clear that $\epsilon_{2n-1}=\overline{\epsilon_{2n}}$, for all $n\geq1$. Then we can think of $\Theta_1$ as a non self-adjoint Hamiltonian in a broken phases, in which the eigenvalues (which are real in the unbroken region) become conjugate in pairs. With this in mind, this example could be relevant in the context of PT quantum mechanics.

(2) In view of the Remark in Section \ref{sectex2} it is interesting to notice that we can introduce a pseudo-fermionic structure even here, defining a pair of operators $a$ and $b$ satisfying $\{a,b\}=\1$ and $a^2=b^2=0$, for each submatrix
$$
[\Theta_1]_j=\left(
               \begin{array}{cc}
                 \alpha_j & \beta_j \\
                 \beta_j & \alpha_j \\
               \end{array}
             \right),
$$
which can be rewritten as in (\ref{add2}). For this, it is sufficient to take $\omega=2\beta_j$, $\alpha=-\beta=1$, $\rho=\alpha_j-\beta_j$ and $\alpha_{12}=-\beta_{12}=\frac{1}{2}$.

\vspace{2mm}

Let now introduce a bounded operator $X$, see \cite{bagintop3}:
$$
\Hil\ni f \rightarrow Xf=\{(Xf)_j, j\in\Bbb N\}=\{\left<\eta_j,f\right>, j\in\Bbb N\},
$$
where $\F_\eta=\{\eta_j\}$ is a tight frame of $\Hil$ defined as follows: $\eta_{2n-1}=\eta_{2n}=\frac{1}{\sqrt{2}}e_n$, $n\geq1$. Here $e_n$ is the $n-th$ vector of the o.n. canonical basis of $l^2(\Bbb N)$, the one with all zero entries except the $n-th$ one, which is equal to one. Then we have $N_2=X^\dagger X=\1$, which is of course bounded and invertible, while
$$
N_1=XX^\dagger=\frac{1}{2}\left(
           \begin{array}{cccccccc}
             1 & 1 & 0 & 0 & 0 & 0 & . & . \\
             1 & 1 & 0 & 0 & 0 & 0 & . & . \\
             0 & 0 & 1 & 1 & 0 & 0 & . & . \\
             0 & 0 & 1 & 1 & 0 & 0 & . & . \\
             0 & 0 & 0 & 0 & 1 & 1 & . & . \\
             0 & 0 & 0 & 0 & 1 & 1 & . & . \\
             . & . & . & . & . & . & . & . \\
             . & . & . & . & . & . & . & . \\
           \end{array}
         \right)=\frac{1}{2}\left(\1+P\right),
$$
Here $\1$ is the identity operator on $\Hil$, and $P$ is a permutation operator acting as follows:
$$
Pc=P\left(
   \begin{array}{c}
     c_1 \\
     c_2 \\
     c_3 \\
     c_4 \\
     c_5 \\
     c_6 \\
     . \\
     . \\
   \end{array}
 \right)=\left(
   \begin{array}{c}
     c_2 \\
     c_1 \\
     c_4 \\
     c_3 \\
     c_6 \\
     c_5 \\
     . \\
     . \\
   \end{array}
 \right),
$$
for all $c\in\Hil$. $N_1$ is also bounded.
We see that $[N_1,\Theta_1]=0$, so that we are in the conditions of Section \ref{sectcase3}. A simple computation shows that $\Theta_2=N_2^{-1}\left(X^\dagger \Theta_1 X\right)=X^\dagger \Theta_1 X$ is a diagonal matrix, with elements $\alpha_k+\beta_k$. In bra-ket notation: $\Theta_2=\sum_{k=1}^\infty (\alpha_k+\beta_k)|e_k\left>\right<e_k| =\sum_{k=1}^\infty \epsilon_k|e_k\left>\right<e_k|$. It is evident that all the {\em original odd eigenvalues} $\epsilon_{2n-1}$ disappear from the game! The reason is simple: each $\varphi_{2n-1}^{(1)}$ belongs to the kernel of $X^\dagger$. Hence $\varphi_{2n-1}^{(2)}=0$. On the other hand, $\varphi_{2n}^{(2)}=X^\dagger \varphi_{2n}^{(1)}=e_n$, for all $n$. Then the set of the eigenvalues of $\Theta_2$ is properly contained in the set of the eigenvalues of $\Theta_1$, but the set of eigenvectors is still a basis for $\Hil$.

\vspace{2mm}

{\bf Remark:--} If $\beta_j$ and $\alpha_j$ are real, and if $\beta_j>\alpha_j>0$, for all $j$, then it is clear that each $\epsilon_{2n}>0$, while each $\epsilon_{2n-1}<0$, for all $n\geq1$. Then, going from $\Theta_1$ to $\Theta_2$ can be seen as a sort of {\em filter}, which removes all the negative eigenvalues present in the original operator $\Theta_1$.

\vspace{2mm}

It is a simple exercise to check that the intertwining relations $X\Theta_2=\Theta_1X$ and $X^\dagger \Theta_1=\Theta_2X^\dagger$ hold true.

Finding now the set $\F_\psi^{(1)}$ is quite easy, due to the fact that $\F_\varphi^{(1)}$ is already an o.n. set. Hence $\psi_n^{(1)}=\varphi_n^{(1)}$, for all $n$. It is also easy to check that $\Theta_1^\dagger\psi_n^{(1)}=\overline{\epsilon_n}\psi_n^{(1)}$. As for $\Theta_2^\dagger$ we get $\Theta_2^\dagger=\sum_{k=1}^\infty (\overline{\alpha_k}+\overline{\beta_k})|e_k\left>\right<e_k|$, and $\psi_n^{(2)}=\varphi_n^{(2)}$. We will briefly return to this example in Section \ref{sectCS}, in connection with bicoherent states.

\subsubsection{A remark, more than an example: nonlinear $\D$-pseudo bosons}\label{sectex41}

In some recent paper the notion of nonlinear $\D$-pseudo bosons has been introduced and analyzed in some details, \cite{bagpnlpb,bagzno1,bagzno2}. An output of this notion is that the eigenvalues and the eigenvectors of the factorized operators $M=ba$ and $M^\dagger=a^\dagger b^\dagger$ can be deduced using some minimal and natural assumptions on $a$ and $b$. Here $a$ and $b$, and their adjoints, are suitable raising and lowering operators on different sets of vectors. More in details, let us consider a strictly increasing sequence $\{\epsilon_n\}$: $0=\epsilon_0<\epsilon_1<\cdots<\epsilon_n<\cdots$. Then, given two
operators $a$ and $b$ on $\Hil$, and a set $\D\subset\Hil$ which is dense in $\Hil$, and which is stable under the action of $a, b, a^\dagger$ and $b^\dagger$,

\begin{defn}
We will say that the triple $(a,b,\{\epsilon_n\})$ is a family of $\D$-non linear pseudo-bosons ($\D$-NLPBs) if the following properties hold:
\begin{itemize}

\item {\bf p1.} a non zero vector $\Phi_0$ exists in $\D$ such that $a\,\Phi_0=0$;

\item {\bf  p2.} a non zero vector $\eta_0$ exists in $\D$ such that $b^\dagger\,\eta_0=0$;

\item {\bf { p3}.} Calling
\be \Phi_n:=\frac{1}{\sqrt{\epsilon_n!}}\,b^n\,\Phi_0,\qquad \eta_n:=\frac{1}{\sqrt{\epsilon_n!}}\,{a^\dagger}^n\,\eta_0, \label{55} \en we
have, for all $n\geq0$, \be a\,\Phi_n=\sqrt{\epsilon_n}\,\Phi_{n-1},\qquad b^\dagger\eta_n=\sqrt{\epsilon_n}\,\eta_{n-1}. \label{56}\en
\item {\bf { p4}.} The set $\F_\Phi=\{\Phi_n,\,n\geq0\}$ is a basis for $\Hil$.

\end{itemize}

\end{defn}

Of course, since $\D$ is stable under the action of $b$ and $a^\dagger$, it follows that $\Phi_n, \eta_n\in \D$, for all $n\geq0$. The set $\F_\eta=\{\eta_n,\,n\geq0\}$ is a basis for $\Hil$ as well. This follows from the fact that
$M\Phi_n=\epsilon_n\Phi_n$ and $M^\dagger\eta_n=\epsilon_n\eta_n$. Therefore, choosing the normalization of $\eta_0$ and $\Phi_0$ in such a way
$\left<\eta_0,\Phi_0\right>=1$, $\F_\eta$ is biorthogonal to the basis $\F_\Phi$. Then, it is possible to check that $\F_\eta$ is the unique
basis which is biorthogonal to $\F_\Phi$.

\vspace{2mm}

We are apparently in the situation considered in Section \ref{sectcase3}, with $\Theta_1=M=ba$, $\varphi_n^{(1)}=\Phi_n$, $X=b$ and $\Theta_2=ab$. In fact, with these choices we have $\Theta_1 X=X\Theta_2$, which is what is required first in Proposition \ref{prop22}. Moreover, the eigenvectors $\psi_n^{(1)}$ must be identified with the $\eta_n$'s\cite{footnote4}. However, Proposition \ref{prop22} cannot be applied here. In fact: first of all $b$ does not admit inverse, so that point $(i)$ of the cited Proposition does not apply. Moreover, there is no reason for $bb^\dagger$ to commute with $\Theta_1$, and for this reason also point $(ii)$ cannot be used in its most relevant part (the construction of the eigenvectors of $\Theta_2$).

However, for completeness,  we will now briefly see that we can still use a different approach, deducing in this way the eigenvectors of many relevant operators involved in the game, but
unfortunately nothing particularly interesting arise. In fact,
 $$\Theta_2\left(a\varphi_n^{(1)}\right)=a\,\Theta_1\varphi_n^{(1)}=\epsilon_n\left(a\varphi_n^{(1)}\right).
 $$
Then $\varphi_n^{(2)}=a\varphi_n^{(1)}$ which is, because of {\bf { p3}.} above, equal to $\sqrt{\epsilon_n}\varphi_{n-1}^{(1)}$, for all $n\geq0$. Here we are simply putting $\varphi_{-1}^{(1)}=0$. Similarly we see that $b^\dagger\psi_n^{(1)}$ are the eigenstates of $\Theta_2^\dagger$. But $b^\dagger\psi_n^{(1)}$ coincides with $\sqrt{\epsilon_n}\psi_{n-1}^{(1)}$, again with the agreement that $\psi_{-1}^{(1)}=0$. In conclusion, in this case, we can still easily find the eigenstates and the eigenvalues of $\Theta_j$ and $\Theta_j^\dagger$, but nothing particularly new appears going from $(\Theta_1,\Theta_1^\dagger)$ to  $(\Theta_2,\Theta_2^\dagger)$: the eigenvectors are the same, but they are just {\em shifted} in their quantum number. This is not really particularly unexpected, since it is a frequent feature in supersymmetric quantum mechanics, where factorizable Hamiltonians play a relevant role, \cite{susyqm}.

\section{Bicoherent states}\label{sectCS}

We will now discuss how coherent states can be defined within the general framework considered in Section \ref{sect2}, and which kind of properties do they have. The working assumptions in this section are the following: (1) $\F_\varphi^{(1)}$ and $\F_\psi^{(1)}$ are biorthogonal bases for $\Hil$, which will be considered in this section to be infinite-dimensional; (2) the eigenvalues $\epsilon_n$ of $\Theta_1$ are positive. In particular, without loss of generality, we can always assume that $0=\epsilon_0<\epsilon_1<\epsilon_2<\cdots$, as it is usually considered in the literature on (generalized) coherent states, see \cite{gazbook} and references therein. Notice that this condition was already assumed in this paper, in Sections \ref{sectex4} and \ref{sectex41}.

Under our assumptions the sets $\Lc_\varphi^{(1)}=l.s.\{\varphi_n^{(1)}\}$ and $\Lc_\psi^{(1)}=l.s.\{\psi_n^{(1)}\}$ are dense in $\Hil$. We define two operators $A_1$ and $B_1^\dagger$ as follows:
$$
\Lc_\varphi^{(1)}\ni f= \sum_{k=0}^N c_k \varphi_k^{(1)}, \quad \Rightarrow \quad A_1 f= \sum_{k=1}^N c_k \sqrt{\epsilon_k} \varphi_{k-1}^{(1)},
$$
and
$$
\Lc_\psi^{(1)}\ni g= \sum_{k=0}^M d_k \psi_k^{(1)}, \quad \Rightarrow \quad B_1^\dagger g= \sum_{k=1}^M d_k \sqrt{\epsilon_k} \psi_{k-1}^{(1)},
$$
for  $M,N<\infty$. Then, in particular we have
\be
A_1\varphi_k^{(1)}=\left\{
    \begin{array}{ll}
    0, \qquad\qquad \mbox{ if } k=0\\
    \sqrt{\epsilon_k} \varphi_{k-1}^{(1)} \quad\,\, \mbox{ if } k\geq1,\\
    \end{array}
        \right.
        \quad \mbox{and}\quad
        B_1^\dagger\psi_k^{(1)}=\left\{
    \begin{array}{ll}
    0, \qquad\qquad \mbox{ if } k=0\\
    \sqrt{\epsilon_k} \psi_{k-1}^{(1)} \quad\,\, \mbox{ if } k\geq1.\\
    \end{array}
        \right.
\label{41}\en
From these formulas, using the biorthogonality and the completeness of $\F_\varphi^{(1)}$ and $\F_\psi^{(1)}$, we also deduce that
\be
A_1^\dagger\psi_k^{(1)}=\sqrt{\epsilon_{k+1}}\,\psi_{k+1}^{(1)}, \qquad B_1\varphi_k^{(1)}=\sqrt{\epsilon_{k+1}}\,\varphi_{k+1}^{(1)},
\label{42}\en
for all $k\geq0$. Hence $A_1$ and $B_1^\dagger$ act as lowering operators, while $A_1^\dagger$ and $B_1$ behave as raising operators, on different sets\cite{footnote5}. A consequence of these equations is that both $\Theta_1$ and $\Theta_1^\dagger$ can be factorized in the following way:
\be
\Theta_1 \varphi_n^{(1)} = B_1A_1 \varphi_n^{(1)}=\epsilon_n \varphi_n^{(1)}, \qquad \Theta_1^\dagger \psi_n^{(1)} = A_1^\dagger B_1^\dagger \psi_n^{(1)}=\epsilon_n \psi_n^{(1)},
\label{43}\en
for all $n\geq0$.

The following  Proposition shows how  bicoherent states can be introduced, and which are their properties.

\begin{prop}\label{thm1} Let us assume that there exist four constants $r_\varphi, r_\psi >0$, and $0\leq \alpha_\varphi, \alpha_\psi\leq \frac{1}{2}$, such that $\|\varphi_n^{(1)}\|\leq r_\varphi^n (\epsilon_n!)^{\alpha_\varphi} $ and $\|\psi_n^{(1)}\|\leq r_\psi^n (\epsilon_n!)^{\alpha_\psi} $, for all $n\geq0$. Let us define
$$
\rho_\varphi = \frac{1}{r_\varphi}\lim_k (\epsilon_{k+1})^{1/2-\alpha_\varphi}, \qquad \rho_\psi = \frac{1}{r_\psi}\lim_k (\epsilon_{k+1})^{1/2-\alpha_\psi}, \qquad  \hat \rho= \lim_k \epsilon_{k+1},
$$
and $\rho:=\min\left(\rho_\varphi, \rho_\psi, \sqrt{\hat\rho}\right)$. Let $C_\rho(0)$ be the circle in the complex plane centered in the origin and with radius $\rho$. Then, defining
\be
N(|z|)=\left(\sum_{k=0}^\infty\frac{|z|^{2k}}{\epsilon_k!}\right)^{-1/2},
\label{44}\en
and
\be
\varphi_1(z)=N(|z|) \sum_{k=0}^\infty\frac{z^{k}}{\sqrt{\epsilon_k!}}\varphi_k^{(1)}, \qquad \psi_1(z)=N(|z|) \sum_{k=0}^\infty\frac{z^{k}}{\sqrt{\epsilon_k!}}\psi_k^{(1)},
\label{45}\en
these are all well defined for $z\in C_\rho(0)$. Moreover, for all such $z$, $\left<\varphi_1(z),\psi_1(z)\right>=1$, $A_1\varphi_1(z)=z\,\varphi_1(z)$ and $B_1^\dagger\psi_1(z)=z\,\psi_1(z)$. Also, if a measure $d\lambda(r)$ exists such that
$\int_0^\rho d\lambda(r) r^{2k}=\frac{\epsilon_k!}{2\pi}$, for all $k\geq0$, then, calling $d\nu(z,\overline{z})=d\lambda(r)\, d\theta$, we have
\be
\int_{C_\rho(0)} d\nu(z,\overline{z}) N(|z|)^{-2}\left<f,\varphi_1(z)\right>\left<\psi_1(z),g\right>=\left<f,g\right>,
\label{46}\en
for all $f,g\in\Hil$.

\end{prop}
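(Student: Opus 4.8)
The plan is to dispatch the four claims in sequence: convergence of the three series, the normalization $\langle\varphi_1(z),\psi_1(z)\rangle=1$, the two eigenvalue equations, and finally the resolution of the identity (\ref{46}). First I would settle well-definedness. The scalar series $N(|z|)^{-2}=\sum_k |z|^{2k}/\epsilon_k!$ has, by the ratio test, consecutive-term ratio $|z|^2/\epsilon_{k+1}\to|z|^2/\hat\rho$, so it converges for $|z|<\sqrt{\hat\rho}$. For the vector series I would pass to norms and use the growth hypothesis $\|\varphi_k^{(1)}\|\le r_\varphi^k(\epsilon_k!)^{\alpha_\varphi}$: the $k$-th term of $\varphi_1(z)$ has norm bounded by $(|z|\,r_\varphi)^k(\epsilon_k!)^{\alpha_\varphi-1/2}$, whose successive ratio tends to $|z|/\rho_\varphi$. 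Hence the series converges absolutely in the complete space $\Hil$ for $|z|<\rho_\varphi$, and similarly $\psi_1(z)$ for $|z|<\rho_\psi$; taking $\rho=\min(\rho_\varphi,\rho_\psi,\sqrt{\hat\rho})$ ensures all three converge on $C_\rho(0)$.

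The normalization is then immediate: expanding both series, interchanging the (absolutely convergent) sums with $\langle\cdot,\cdot\rangle$ by continuity, and invoking biorthogonality $\langle\varphi_j^{(1)},\psi_k^{(1)}\rangle=\delta_{j,k}$ collapses the double sum to $N(|z|)^2\sum_k|z|^{2k}/\epsilon_k!=1$. For the eigenvalue equations I would apply $A_1$ term by term, use (\ref{41}) together with the elementary identity $\sqrt{\epsilon_k}/\sqrt{\epsilon_k!}=1/\sqrt{\epsilon_{k-1}!}$, and reindex $k\mapsto k-1$ to obtain $A_1\varphi_1(z)=z\,\varphi_1(z)$; the argument for $B_1^\dagger\psi_1(z)=z\,\psi_1(z)$ is identical using the second relation in (\ref{41}).

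For (\ref{46}) I would substitute the series for $\varphi_1(z)$ and $\psi_1(z)$ into the left-hand side. After cancelling the factor $N(|z|)^{-2}$, the integrand becomes the double sum $\sum_{j,k} z^j\bar z^k(\epsilon_j!\,\epsilon_k!)^{-1/2}\langle f,\varphi_j^{(1)}\rangle\langle\psi_k^{(1)},g\rangle$. Writing $z=re^{i\theta}$, the angular integral $\int_0^{2\pi}e^{i(j-k)\theta}\,d\theta=2\pi\delta_{j,k}$ removes the off-diagonal terms, and the radial moment condition $\int_0^\rho d\lambda(r)\,r^{2j}=\epsilon_j!/2\pi$ exactly cancels the surviving $\epsilon_j!$, leaving $\sum_j\langle f,\varphi_j^{(1)}\rangle\langle\psi_j^{(1)},g\rangle$. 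This equals $\langle f,g\rangle$ by the biorthogonal expansion $g=\sum_j\langle\psi_j^{(1)},g\rangle\,\varphi_j^{(1)}$, finishing the proof.

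The main obstacle is not these formal manipulations but their rigorous justification. The delicate step in (\ref{46}) is exchanging the integral over $C_\rho(0)$ with the double summation: I would justify this by a Fubini/dominated-convergence argument, estimating the summand with the norm bounds together with $|\langle f,\varphi_j^{(1)}\rangle|\le\|f\|\,\|\varphi_j^{(1)}\|$ and checking integrability against $d\lambda(r)\,d\theta$, working on $|z|<\rho$ and passing to the limit if convergence at $|z|=\rho$ is only conditional. A second, milder point is that $\varphi_1(z)$ is an infinite series and so is not literally in the domain $\Lc_\varphi^{(1)}$ on which $A_1$ was defined; the eigenvalue equation in the second step should therefore be read for the closure of $A_1$, using that the partial sums lie in $\Lc_\varphi^{(1)}$, converge to $\varphi_1(z)$, and have images converging to $z\,\varphi_1(z)$.
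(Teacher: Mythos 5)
Your proposal is correct and follows essentially the same route as the paper's own proof: norm estimates via the growth hypotheses and the ratio test for well-definedness, termwise application of the lowering relations (\ref{41}) for the eigenvalue equations, and the angular integral plus the moment condition on $d\lambda(r)$ combined with biorthogonality for (\ref{46}). Your closing remarks on justifying the integral--sum interchange and on reading $A_1\varphi_1(z)=z\varphi_1(z)$ via the closure of $A_1$ address points the paper passes over silently, and are welcome additions rather than deviations.
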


\begin{proof}
First of all it is clear that the series in (\ref{44}) converges in $C_\rho(0)$, since it converges if $|z|^2<\hat\rho$. Now, let us compute $\|\varphi_1(z)\|^2$. Using our assumption on $\|\varphi_n^{(1)}\|$ we get
$$
\|\varphi_1(z)\|\leq|N(|z|)|\sum_{k=0}^\infty\frac{|z|^k}{\sqrt{\epsilon_k!}}\left\|\varphi_k^{(1)}\right\| \leq |N(|z|)| \sum_{k=0}^\infty \frac{(|z|r_\varphi)^k}{(\epsilon_k!)^{1/2-\alpha_\varphi}}.
$$
The power series in the right-hand side converges for all $z$ with $|z|<\rho_\varphi$. Analogously, we can prove that $\|\psi_1(z)\|$ is bounded from above by a power series (times $|N(|z|)|$), which converges for all $z$ with $|z|<\rho_\psi$. Concluding, if $|z|<\rho$,  all the relevant power series do converge: hence both $\varphi_1(z)$ and $\psi_1(z)$ are well defined. Moreover, it is easy to check that $\left<\varphi_1(z),\psi_1(z)\right>=1$ for all such $z$'s.

To show that $\varphi_1(z)$ is an eigenstate of $A_1$ we use (\ref{41}):
$$
A_1\varphi_1(z)=N(|z|)A_1\left(\varphi_0^{(1)}+\frac{z}{\epsilon_1!}\varphi_1^{(1)}+\frac{z^2}{\epsilon_2!}\varphi_2^{(1)}
+\frac{z^3}{\epsilon_3!}\varphi_3^{(1)}+\cdots\right)=
$$
$$
=z N(|z|)\left(\varphi_0^{(1)}+\frac{z}{\epsilon_1!}\varphi_1^{(1)}+\frac{z^2}{\epsilon_2!}\varphi_2^{(1)}
+\frac{z^3}{\epsilon_3!}\varphi_3^{(1)}+\cdots\right) = z \,\varphi_1(z).
$$
In the same way, using the lowering properties of $B_1^\dagger$ on the vectors $\psi_n^{(1)}$, we deduce that $B_1^\dagger\psi_1(z)=z\,\psi_1(z)$.

Finally, to check the resolution of the identity in (\ref{46}), we observe that, taken $f,g\in\Hil$,
$$
\int_{C_\rho(0)} d\nu(z,\overline{z}) N(|z|)^{-2}\left<f,\varphi_1(z)\right>\left<\psi_1(z),g\right>=$$
$$=\sum_{k,l=0}^\infty \frac{\left<f,\varphi_k^{(1)}\right>\left<\psi_l^{(1)},g\right>}{\sqrt{\epsilon_k!\epsilon_l!}}\int_0^\rho d\lambda(r) r^k r^l \int_0^{2\pi} d\theta e^{ik\theta} e^{-il\theta}=$$
$$=2\pi \sum_{k=0}^\infty \frac{\left<f,\varphi_k^{(1)}\right>\left<\psi_k^{(1)},g\right>}{\epsilon_k!}\int_0^\rho d\lambda(r) r^{2k}=$$
$$=\sum_{k=0}^\infty \left<f,\varphi_k^{(1)}\right>\left<\psi_k^{(1)},g\right>=\left<f,g\right>.
$$
We have used here the property of the measure $d\lambda(r)$, and the fact that $\F_\varphi^{(1)}$ and $\F_\psi^{(1)}$ are biorthogonal bases.

\end{proof}

\vspace{2mm}

{\bf Remarks:--} (1) This Proposition extends significantly a similar result originally given in \cite{abg}, where only the existence of states similar to our $\varphi_1(z)$ and $\psi_1(z)$ was discussed.

(2) It is clear that we also have
$$
\int_{C_\rho(0)} d\nu(z,\overline{z}) N(|z|)^{-2}|\psi_1(z)\left>\right<\varphi_1(z)|=\1,
$$
which is analogous to (\ref{46}) but expressed in terms of Dirac bras and kets.

(3) The existence of $\rho_\varphi$, $\rho_\psi$ and $\hat\rho$ is not guaranteed a priori. But it is clear that do exist in some particular cases. For instance, when our framework collapses to the one of {\em standard} coherent states, \cite{gazbook}, i.e. when $B_1=A_1^\dagger$, $\epsilon_k=k$ and when $\F_\varphi^{(1)}$ coincides with $\F_\psi^{(1)}$ and is an orthonormal basis. Then we can take $\alpha_\varphi=\alpha_\psi=0$ and $r_\varphi=r_\psi=1$, so that $\rho_\varphi=\rho_\psi=\hat\rho=\infty$. Hence we have convergence in all the complex plane. Another example is discussed in Section \ref{sectanexample}. Other examples can be found in \cite{abg}.

\vspace{2mm}

So far we have considered bicoherent states  constructed  using the eigenstates of $\Theta_1$ and of $\Theta_1^\dagger$. In Section \ref{sect2} we have shown how, to these operators and these eigenvectors, we can associate new operators ($\Theta_2$ and $\Theta_2^\dagger$) and new eigenvectors, $\varphi_k^{(2)}$ and $\psi_k^{(2)}$. It is natural, therefore, to check if also these vectors can be used to define a new pair of bicoherent states, and how.

Before starting, we remind that while $\left<\varphi_n^{(1)},\psi_k^{(1)}\right>=\delta_{n,k}$, $\left<\varphi_k^{(2)},\psi_n^{(2)}\right>= \tilde k_n\delta_{k,n}$, see (\ref{28bis}). This fact has consequences in the definition of $\varphi_2(z)$ and $\psi_2(z)$, which we now introduce as follows
\be
\varphi_2(z)=N(|z|) \sum_{k=0}^\infty\frac{z^{k}}{\sqrt{\epsilon_k!\,\tilde k_k}}\varphi_k^{(2)}, \qquad \psi_2(z)=N(|z|) \sum_{k=0}^\infty\frac{z^{k}}{\sqrt{\epsilon_k!\,\tilde k_k}}\psi_k^{(2)}.
\label{47}\en
We are assuming, for the moment, that all the $\tilde k_n$'s are different from zero. Let us now observe that $\|\varphi_n^{(2)}\|^2=\|X^\dagger\varphi_n^{(1)}\|^2=\tilde k_n\|\varphi_n^{(1)}\|^2$ and $\|\psi_n^{(2)}\|^2=\|X^\dagger\psi_n^{(1)}\|^2=\tilde k_n\|\psi_n^{(1)}\|^2$. Then, under the assumptions of Proposition \ref{thm1}, with these definitions, both $\varphi_2(z)$ and $\psi_2(z)$ are well defined for all $z\in C_\rho(0)$. Moreover, once again, $\left<\varphi_2(z),\psi_2(z)\right>=1$. If the measure $d\lambda(r)$ is now replaced by a new measure, $d\tilde\lambda(r)$, satisfying the new moment problem
$\int_0^\rho d\tilde\lambda(r) r^{2k}=\frac{\epsilon_k!\tilde k_k}{2\pi}$, for all $k\geq0$, then
\be
\int_{C_\rho(0)} d\tilde\nu(z,\overline{z}) N(|z|)^{-2}|\varphi_2(z)\left>\right<\psi_2(z)|=\sum_{k=0}^\infty |\varphi_k^{(2)}\left>\right<\psi_k^{(2)}|,
\label{48}\en
where $d\tilde\nu(z,\overline{z})=d\tilde\lambda(r) d\theta$. Notice that, in general, this is not necessarily equal to the identity operator, except if $\F_\varphi^{(2)}$ and $\F_\psi^{(2)}$ are also bases, which is not granted a priori.  It might happen, for instance, that $\F_\varphi^{(2)}$ and $\F_\psi^{(2)}$ are not bases, but they are complete in $\Hil$. This is, for instance, what is observed in several applications involving the so-called $\D$-pseudo bosons, \cite{baginbagbook}, where several sets of eigenvectors of non self-adjoint operators turn out to be complete in $\Hil$, but not bases\cite{footnote6} for $\Hil$. In both cases, the linear span of the $\varphi_n^{(2)}$'s, $\Lc_\varphi^{(2)}$, and of the $\psi_n^{(2)}$'s,  $\Lc_\psi^{(2)}$, are again dense in $\Hil$. For this reason we can still introduce two (in general, unbounded) operators on these (dense) sets. Again, the difference of mutual normalization between $(\F_\varphi^{(1)},\F_\psi^{(1)})$ and $(\F_\varphi^{(2)},\F_\psi^{(2)})$, suggests to change a little bit the definition originally given for $A_1$ and $B_1^\dagger$. In fact, it is now convenient to define
$$
\Lc_\varphi^{(2)}\ni f= \sum_{k=0}^N c_k \varphi_k^{(2)}, \quad \Rightarrow \quad A_2 f= \sum_{k=1}^N c_k \sqrt{\frac{\epsilon_k\tilde k_k}{\tilde k_{k-1}}}\, \varphi_{k-1}^{(2)},
$$
and
$$
\Lc_\psi^{(2)}\ni g= \sum_{k=0}^M d_k \psi_k^{(2)}, \quad \Rightarrow \quad B_2^\dagger g= \sum_{k=1}^M d_k \sqrt{\frac{\epsilon_k\tilde k_k}{\tilde k_{k-1}}}\, \psi_{k-1}^{(2)},
$$
so that, in particular,
\be
A_2\varphi_k^{(2)}=\left\{
    \begin{array}{ll}
    0, \qquad\qquad\quad\, \mbox{ if } k=0\\
    \sqrt{\frac{\epsilon_k\tilde k_k}{\tilde k_{k-1}}}\,\varphi_{k-1}^{(2)} \quad\,\, \mbox{ if } k\geq1,\\
    \end{array}
        \right.
        \quad\mbox{and}\quad
        B_2^\dagger\psi_k^{(2)}=\left\{
    \begin{array}{ll}
    0, \qquad\qquad\quad \mbox{ if } k=0\\
    \sqrt{\frac{\epsilon_k\tilde k_k}{\tilde k_{k-1}}}\, \psi_{k-1}^{(2)} \quad\,\, \mbox{ if } k\geq1.\\
    \end{array}
        \right.
\label{49}\en
Now, it is obvious that $A_2\varphi_2(z)=z\,\varphi_2(z)$ and $B_2^\dagger\psi_2(z)=z\,\psi_2(z)$, which means that these states are eigenstates of some suitable lowering operators. Summarizing, for $\varphi_2(z)$ and $\psi_2(z)$ we can deduce the same conclusions as in Proposition \ref{thm1}, except that (\ref{46}) must be replaced by equation (\ref{48}).

Using $A_2$, $B_2^\dagger$ and their adjoints it is also possible to see that $\Theta_2$ and $\Theta_2^\dagger$ admit a sort of factorization. In fact, we can see that
\be
A_2^\dagger\psi_k^{(2)}=\sqrt{\frac{\epsilon_{k+1}\tilde k_k}{\tilde k_{k+1}}}\,\psi_{k+1}^{(2)}, \qquad B_2\varphi_k^{(2)}=\sqrt{\frac{\epsilon_{k+1}\tilde k_k}{\tilde k_{k+1}}}\,\varphi_{k+1}^{(2)},
\label{410}\en
for all $k\geq0$. Therefore, in analogy with equation (\ref{43}), we find that
\be
\Theta_2 \varphi_n^{(2)} = B_2A_2 \varphi_n^{(2)}=\epsilon_n \varphi_n^{(2)}, \qquad \Theta_2^\dagger \psi_n^{(2)} = A_2^\dagger B_2^\dagger \psi_n^{(2)}=\epsilon_n \psi_n^{(2)},
\label{411}\en
for all $n\geq0$.

\vspace{2mm}

{\bf Remark:--} As already mentioned, in what we have done so far, it is important the fact that each $\varphi_{n}^{(1)}\notin\ker(X^\dagger)$, so that $\varphi_{n}^{(2)}\neq0$ and $\tilde k_n\neq0$. In fact, if this is not so and if for some index $\hat n$ we have $\tilde k_{\hat n}=0$, then many of the previous formulas have problems, and $\varphi_2(z)$ and $\psi_2(z)$ cannot be defined, apparently. This is not really so: let us call $\K=\{n\in{\Bbb N}_0: \varphi_{n}^{(1)}\in\ker(X^\dagger)\}$, and $\K^c={\Bbb N}_0\setminus\K$. It is useful (but not mandatory) to assume that $0\in\K^c$. Of course, $\K^c$ is a fully ordered set of natural numbers (with zero), so we can relabel the vectors $\varphi_n^{(2)}$'s and the  $\epsilon_n$'s in an unique way: $\tilde\varphi_l^{(2)}=\varphi_{n_l}^{(2)}$, $\tilde\epsilon_l=\epsilon_{n_l}$, for all $l\in\Bbb N_0$. Here $0=n_0<n_1<n_2<n_3<\cdots$.
Then we introduce the set $\F_{\tilde\varphi}^{(2)}=\{\tilde\varphi_l^{(2)}\}$
and, in the same way, a second set $\F_{\tilde\psi}^{(2)}=\{\tilde\psi_l^{(2)}\}$, extracted out of $\F_\psi^{(2)}$. Since $n_k=n_l$ if and only if $k=l$, we deduce that $\F_{\tilde\varphi}^{(2)}$ and $\F_{\tilde\psi}^{(2)}$ are biorthogonal, $\left<\tilde\varphi_l^{(2)},\tilde\psi_n^{(2)}\right>=\tilde k_l\,\delta_{l,n}$, with all the $\tilde k_l$ which are now, by construction, different from zero. Then our new bicoherent states, which replace those in (\ref{47}), are the following:
\be
\tilde\varphi_2(z)= \tilde N(|z|) \sum_{k=0}^\infty\frac{z^{k}}{\sqrt{\tilde\epsilon_k!\,\tilde k_k}}\tilde\varphi_k^{(2)}, \qquad \tilde\psi_2(z)=\tilde N(|z|) \sum_{k=0}^\infty\frac{z^{k}}{\sqrt{\tilde\epsilon_k!\,\tilde k_k}}\tilde\psi_k^{(2)}.
\label{411bis}\en
Nothing really changes with respect with what we have done before. The only difference has to do with the nature of the sets $\F_{\tilde\varphi}^{(2)}$ and $\F_{\tilde\psi}^{(2)}$, which might no longer be complete (or basis),  even when $\F_\varphi^{(2)}$ and $\F_\psi^{(2)}$  are complete (or basis). The existence of these vectors, and their other properties, can now still be deduced without major differences with respect to what we have shown before.

\subsection{Quantization via bicoherent states}

The problem of quantizing a classical system has a very long story, and can be approached in several ways. We refer to \cite{gazbook,aliengl} for many details on this topic. What we want to do here is to show that bicoherent states can also be used for this purpose. In particular we will show that $A_1$ and $B_1$ in (\ref{41}) and (\ref{42}) can be seen as the result of a {\em suitable quantization} of $z$ and $\overline z$. More explicitly, we will now prove that
\be
\left<f,A_1g\right>=\left<f,\left(\int_{C_\rho(0)} d\nu(z,\overline{z}) N(|z|)^{-2}\,z\,|\varphi_1(z)><\psi_1(z)|\right)g\right>,
\label{add3}\en
and
\be
\left<f,B_1g\right>=\left<f,\left(\int_{C_\rho(0)} d\nu(z,\overline{z}) N(|z|)^{-2}\,\overline{z}\,|\varphi_1(z)><\psi_1(z)|\right)g\right>,
\label{add4}\en
for all $f\in\Hil$ and $g\in D(A_1)\cap D(B_1)$, which is dense in $\Hil$ since it contains $\Lc_\varphi^{(1)}$.

To prove (\ref{add3}), we observe first that, because of the property of $d\lambda(r)$ in $d\nu(z,\overline{z})=d\lambda(r)\, d\theta$,
$$
\int_{C_\rho(0)} d\nu(z,\overline{z})z^{k+1}\,{\overline z}^l=\epsilon_{l}!\,\delta_{k+1,l}.
$$
Therefore, after some computations very similar to those needed to prove the resolution of the identity,
\be
\left<f,\left(\int_{C_\rho(0)} d\nu(z,\overline{z}) N(|z|)^{-2}\,z\,|\varphi_1(z)><\psi_1(z)|\right)g\right>=\sum_{k=0}^\infty \sqrt{\epsilon_{k+1}} \left<f,\varphi_k^{(1)}\right>\left<\psi_{k+1}^{(1)},g\right>,
\label{add5}\en
which is equal to $\left<f,A_1g\right>$. In fact, since $\F_\varphi^{(1)}$ and $\F_\psi^{(1)}$ are biorthogonal bases for $\Hil$, we can write $g=\sum_{k=0}^\infty\left<\psi_k^{(1)},g\right>\varphi_k^{(1)}$. Then, recalling that $g\in D(A_1)$,
$$
A_1g=\sum_{k=0}^\infty\left<\psi_k^{(1)},g\right>A_1\varphi_k^{(1)}=\sum_{k=1}^\infty\left<\psi_k^{(1)},g\right>\sqrt{\epsilon_k}
\varphi_{k-1}^{(1)}=\sum_{k=0}^\infty\left<\psi_{k+1}^{(1)},g\right>\sqrt{\epsilon_{k+1}}
\varphi_{k}^{(1)},
$$
which returns the RHS of (\ref{add5}) after taking the scalar product with $f$. Formula (\ref{add4}) can be proved in a similar way.

To get the operators $A_1^\dagger$ and $B_1^\dagger$ we can repeat the same steps, but with the role of the $\varphi$'s and $\psi$'s exchanged. Of course, more operators could be written in terms of $z$, $\overline z$, and functions of these.

\subsection{An example}\label{sectanexample}

What we are going to discuss here is based on the example discussed in Section \ref{sectex3}. In particular, we will adopt the following choice: $\beta_k=\alpha_1$ and $\alpha_k=(4k-3)\alpha_1$, for all $k=1,2,3,\ldots$. We also fix $\alpha_1\in\Bbb R$. Hence the eigenvalues of $\Theta_1$ become $\hat \epsilon_n=\epsilon_{n+1}=2n\alpha_1$, $n\geq0$. The reason for introducing these $\hat\epsilon_n$, with the property that $\hat\epsilon_0=0$, and the related $\hat\varphi_n^{(1)}=\varphi_{n+1}^{(1)}$, $n=0,1,2,3,\ldots$,
 is that this  was required at the beginning of this section, and was used in the derivation of some formulas. Of course, since $\hat\psi_n^{(1)}=\hat\varphi_n^{(1)}$, the bicoherent states $\varphi_1(z)$ and $\psi_1(z)$ coincide:
\be
\varphi_1(z)=\psi_1(z)=N_1(|z|)\sum_{n=0}^\infty \frac{z^n}{\sqrt{\hat\epsilon_n!}}\,\hat\varphi_n^{(1)}.
\label{32}\en
Now, since $\hat\epsilon_n!=(2\alpha_1)^nn!$, we deduce that $N_1(|z|)=e^{-\frac{|z|^2}{4\alpha_1}}$, for all $z\in \Bbb C$. Formula (\ref{32}) produces now
$$
\varphi_1(z)=\psi_1(z)=\frac{1}{\sqrt{2}}e^{-\frac{|z|^2}{4\alpha_1}}\left(
                                                                       \begin{array}{c}
                                                                         1+\frac{z}{\sqrt{2\alpha_1}} \\
                                                                         -1+\frac{z}{\sqrt{2\alpha_1}} \\
                                                                         \frac{z^2}{2\alpha_1\sqrt{2!}}\left(1+\frac{z}{\sqrt{2\alpha_13!}}\right) \\
                                                                         \frac{z^2}{2\alpha_1\sqrt{2!}}\left(-1+\frac{z}{\sqrt{2\alpha_13!}}\right) \\
                                                                         \cdots \\
                                                                         \cdots \\
                                                                         \cdots \\
                                                                         \cdots \\
                                                                       \end{array}
                                                                     \right).
$$
Since $\|\varphi_k^{(1)}\|=\|\psi_k^{(1)}\|=1$ for all $k$, the hypotheses on the norms of the vectors in Proposition \ref{thm1} are satisfied by simply taking $r_\varphi=r_\psi=1$, and $\alpha_\varphi=\alpha_\psi=0$. Then we have $\rho_\varphi=\rho_\psi=\infty$. In other words: $\varphi_1(z)$ and $\psi_1(z)$ coincide and exist for all $z\in \Bbb C$.

As for $\varphi_2(z)$ and $\psi_2(z)$ the situation is a bit more complicated, since,  $\varphi_{2n+1}^{(2)}=\psi_{2n+1}^{(2)}=0$, for all $n\in\Bbb N_0$. Hence we are in the situation discussed in the Remark at the end of Section \ref{sectCS}, with $\K^c=\{2n, \,n\geq0\}$. We find $\tilde\varphi_l^{(2)}=\varphi_{2l}^{(2)}=e_l$ and $\tilde\psi_l^{(2)}=\psi_{2l}^{(2)}=e_l$, so that $\left<\tilde\varphi_l^{(2)},\tilde\psi_n^{(2)}\right>=\delta_{l,n}$. Then $\tilde k_n=1$ for all $n$. Also $\tilde\epsilon_l=\hat\epsilon_{2l}=(2\alpha_1)^{2l}(2l)!$, and
$$
\tilde\varphi_2(z)=\tilde\psi_2(z) = \tilde N(|z|) \sum_{l=0}^\infty\frac{(z/\alpha_1)^l}{\sqrt{(2l)!}}\,e_l,
$$
which is defined in all of $\Bbb C$. The normalization turns out to be $\tilde N(|z|)=\left(\cosh\left(\frac{|z|}{\alpha_1}\right)\right)^{-1/2}$, for all $z\in\Bbb C$. The properties of these states (mutual normalization, resolution of the identity, etc.) easily follow from the general results proved in this section.

\section{Conclusions}

In this paper we have discussed how to use some ideas coming from the theory of intertwining operators and other ideas concerning non self-adjoint Hamiltonians, to construct new exactly solvable models. We have shown that this, in most of the cases, is really different from the standard situation where similarity conditions between different Hamiltonians can be established. We have found that, in all the cases considered here, useful intertwining relations can be deduced. We have also seen that bicoherent states can naturally be introduced, and we have found conditions for these states to exist and to satisfy several properties which are usually required to mostly all classes of coherent states, like the resolution of the identity and the fact of being eigenstates of certain lowering operators.

We believe that much more can still be deduced within the framework proposed here, in particular for what concerns the bicoherent states or for the deduction of more solvable models. These are two aspects which are presently under deeper investigation.

\section*{Acknowledgements}
The author would like to acknowledge  support from the
   Universit\`a di Palermo and from Gnfm.

\end{document}